\theoremstyle{plain}
\newtheorem{lem}{Lemma}
\newtheorem{prop}{Proposition}
\newtheorem{prob}{Problem}
\newtheorem{thm}{Theorem}
\newcommand{\bs}[1]{\ensuremath{\boldsymbol{#1}}}
\newcommand{\bv}{\ensuremath{\bs v}\xspace}
\newcommand{\bw}{\ensuremath{\bs w}\xspace}
\newcommand{\bx}{\ensuremath{\bs x}\xspace}
\newcommand{\bX}{\ensuremath{\bs X}\xspace}
\newcommand{\bxi}{\ensuremath{\bs \xi}\xspace}
\newcommand{\pci}[1]{\ensuremath{\bs{p}_{\boldsymbol{c}_{i}}}\xspace}
\newcommand{\tdlateroptions}{[color=blue!20]}
\newcommand{\mytodo}{\expandafter\todo\tdlateroptions}
\newcommand{\Prob}{\mathbb{P}}
\newcommand{\Exp}{\mathbb{E}}
\newcommand{\FSRset}{\mathrm{FSReach}}
\newcommand{\FSRseta}{\mathrm{FSReach'}}
\newcommand{\probref}[1]{\emph{#1}}
\begin{document}
\title{Computation of forward stochastic reach sets: Application to stochastic, dynamic obstacle avoidance}

\author{Baisravan HomChaudhuri, Abraham P. Vinod, and Meeko M. K. Oishi
\thanks{This material is based upon work supported by the National Science Foundation under Grant Number IIS-1528047, CMMI-1254990 (Oishi, CAREER), and CNS-1329878. Any opinions, findings, and conclusions or recommendations expressed in this material are those of the authors and do not necessarily reflect the views of the National Science Foundation. 
\newline
    B. HomChaudhuri, A. P. Vinod, and M. Oishi are with 
    Electrical and Computer Engineering, University of New Mexico, Albuquerque,
NM 87131 USA.  Corresponding author email: \texttt{oishi@unm.edu}}
}
\maketitle

\begin{abstract}
We propose a method to efficiently compute the forward stochastic reach (FSR) set and its probability measure for nonlinear systems with an affine disturbance input, that is stochastic and bounded.  This method is applicable to systems with an {\em a priori} known controller, or to uncontrolled systems, and often arises in problems in obstacle avoidance in mobile robotics.  When used as a constraint in finite horizon controller synthesis, the FSR set and its probability measure facilitate {\em probabilistic} collision avoidance, in contrast to methods which presume the obstacles act in a worst-case fashion, and generate hard constraints that cannot be violated.    
We tailor our approach to accommodate rigid body constraints, and show convexity is assured so long as the rigid body shape of each obstacle is also convex.  We extend methods for multi-obstacle avoidance through mixed integer linear programming (with linear robot and obstacle dynamics) to accommodate chance constraints that represent the FSR set probability measure.  We demonstrate our method on a rigid-body obstacle avoidance scenario, in which a receding horizon controller is designed to avoid several stochastically moving obstacles while reaching a desired goal.  Our approach can provide solutions when approaches that presume a worst-case action from the obstacle fail.
%     scales [[polynominally]] with the number of obstacles in the environment, and 
% Our proposed method for reachability analysis for linear and nonlinear systems with a known controller or uncontrolled nonlinear systems and an affine disturbance
%     term. Our method decouples the stochastic and deterministic components in
%     the analysis and builds on existing deterministic reachability techniques.
%     For nonlinear systems, exact computation is possible for a bounded countable
%     disturbance set. 
\end{abstract}

\begin{IEEEkeywords}
    Reachability, obstacle avoidance, model predictive control, stochastic optimal control, robotic navigation
\end{IEEEkeywords}

\section{Introduction}
\label{sec:introduction}
Navigation in stochastic, dynamic environments is a challenging task in a variety of application domains, including robotics, autonomous driving, unmanned aerial vehicles, and other transportation systems.  In any realistic environment, reliable, collision-free navigation is paramount, and must be implementable in a manner that is amenable to real-time operation.  For an environment with stochastic, dynamic obstacles, accurate prediction of potential obstacle locations, as well as likelihood of obstacle occupancy at those locations, constrains navigation.  Further, physical constraints arising due to, e.g., separation constraints or the geometry of rigid body (not point-mass) obstacles must also be incorporated. Synthesizing these constraints into existing frameworks for robot navigation requires efficient representation of obstacle avoidance constraints.   
We propose a method to compute the forward stochastic reachable (FSR) set and its probability measure for dynamical systems with affine disturbance input, motivated by the problem of collision-free navigation in an environment with many stochastic, dynamic, rigid-body obstacles.

A variety of approaches have been proposed for navigation amidst dynamic obstacles.  Some formulations are reactive, meaning that instead of incorporating predictions of the obstacle location, they take action according to the current measurement only \cite{rimon1992exact}.  Predictive formulations, in contrast, anticipate future motion, sometimes through the use of a constrained finite-horizon optimization framework, with constraints arising due to robot dynamics and predictions of obstacle position.  These methods involve solving a mixed-integer linear program \cite{schouwenaars2002safe,schouwenaars2001mixed}, a mixed-integer quadratic program, \cite{mellinger2012mixed}, or using sampling based methods \cite{karaman2011sampling,ses}.  

Predictions of obstacle location are dependent upon assumptions about obstacle dynamics and stochastic properties.  
%Many approaches use a worst-case approach, in which obstacle action takes on a value in a bounded set, and hence the entire set of possible obstacle locations (e.g., the forward reachable set) can be predicted over a finite horizon.  
For non-holonomic point-mass obstacles, velocity obstacles \cite{fiorini1998motion} exploit a closed-form solution to approximate the forward reachable set over a finite horizon, presuming a constant velocity.  For probabilistic obstacles with bounded uncertainty, 
a variety of approaches compute the set of {\em all} possible obstacle states, 
%and insist that the robot should avoid them, 
but not the likelihood of obstacle occupancy, 
with application to robotics \cite{wu2012guaranteed,chung2006coordinated,althoff2014online,lin2015collision}, and to  automotive vehicles \cite{althoff2014online}. 
These approaches are conservative, in that they rule out potentially large areas of the state-space, even if obstacle occupancy is low. 
Some non-conservative solutions involve receding horizon controllers to avoid collision at the expected future location of the obstacles \cite{du2012robot}, but 
can still lead to collision with excessively high variance or with multiple obstacles.

%In this paper, we consider the uncertain dynamic environment to be comprising of multiple stochastically moving obstacles and develop a framework for predicting the future obstacle locations, to be used for finite horizon decision making for obstacle avoidance.
%
%Many methods are available that solve the obstacle avoidance problem for static and deterministically moving obstacles such as velocity obstacle methods \cite{fiorini1998motion}, mixed-integer linear \cite{schouwenaars2002safe,schouwenaars2001mixed} and quadratic programming \cite{mellinger2012mixed} based methods, and sampling based methods \cite{karaman2011sampling}. Despite this, avoiding stochastically moving obstacles is still a difficult problem where reactive methods, that do not use prediction, perform poorly as it leads a robot to inevitable collision states (ICSs) \cite{fraichard2004inevitable}. Decision making/controller synthesis over a horizon is also difficult because that requires accurate prediction of the uncertain obstacles. Many different approaches are available that focus on avoiding stochastically moving obstacles.

Strict assurances of safety are possible with 
backward reachable sets \cite{Mitchell2007}.  
A controller is constructed by solving the Hamilton-Jacobi-Issacs equation \cite{mitchell2005time, fisac2015reach, ding2010robust, takei2012time} presuming the worst case realization of the obstacle or disturbance (also referred to as the `min-max' or {\em robust} solution). 
Methods based on the backwards reachable set often suffer from computational complexity that is exponential in the dimensionality of the state space.  Low-dimensional systems in 
aerospace and automotive applications have been explored \cite{mitchell2005time,Marg1,Ding2011,Gillula2011,chen2015safe}. 
%Both the backward and forward reachability based methods compute the states that need to be avoided by the robot to avoid colliding with moving obstacles. 
However, for stochastic obstacles with bounded input, these methods can be overly conservative, especially when the disturbance variance is large, or in scenarios with multiple moving obstacles, when the collision-free space diminishes quickly as the time horizon increases. 
We previously used backwards reachable sets to weight probabilistic roadmaps \cite{malone2014stochastic} and artificial potential fields \cite{chiang2015aggressive}, but without assurances of safety, since the sets could only be computed pairwise between the robot and a single obstacle, due to computational complexity.

Accurate predictions are particularly key for dynamic, stochastic obstacles. 
%Many methods specifically exploit the likelihood of obstacle location, to generate 
Probabilistically safe trajectories \cite{aoude2013probabilistically,ses,best2015bayesian,havlak2014discrete} exploit knowledge of the likelihood of obstacle location.
%Prediction of future obstacle locations is highly relevant when dealing with stochastically moving obstacles. 
%Authors in \cite{aoude2013probabilistically,ses,best2015bayesian,havlak2014discrete} have utilized the future probability distribution of the obstacles for generating probabilistically safe robot trajectories. 
Predictions have been accomplished via Monte Carlo simulations  \cite{aoude2013probabilistically,ses,best2015bayesian} and via Gaussian mixture models \cite{havlak2014discrete}.  While these methods have an appealing flexibility and simplicity, the quality of the prediction of obstacle location is highly dependent on the number of particles used, and it is in general difficult to estimate a priori the number of particles required for a desired quality.

%In our previous work, we presented motion planning algorithms for a large number of stochastically moving obstacles using backward stochastic reachability calculation and artificial potential fields~\cite{malone2014stochastic,chiang2015aggressive}. However, due to computational complexity, the backward stochastic reachability computation was done in relative coordinate for every obstacle-robot pair which did not provide safety guarantees for environments with multiple stochastically moving obstacles.

We propose an alternative method of prediction, based on forward stochastic reachable sets, that uses not only the set of states that the obstacle can reach, but also the likelihood of occupancy of all possible obstacle locations.  We present an iterative formula for the computation of the FSR set for nonlinear dynamical systems with affine disturbance input, which is exact for a bounded, countable disturbance set, and a method   
for the computation of the FSR probability measure.  
We extend this approach to rigid-body obstacles with convex geometry through the use of an indicator function that represents the body geometry.  We derive an {\em occupancy function} for a rigid-body obstacle that can be used to generate an exact set of states that the robot should avoid to avoid collision with at least a certain likelihood.  Superlevel sets of the occupancy function become inequality constraints for integer programming based methods for obstacle avoidance \cite{schouwenaars2002safe,schouwenaars2001mixed,mellinger2012mixed}.
%, to provide an assurance that any feasible trajectory will have a likelihood of collision that is upper-bounded by the value of the superlevel set.  
For scenarios with multiple obstacles, we use an over-approximation which can be expressed as the union of convex sets, since the union of superlevel sets of occupancy functions for each obstacle is not necessarily convex.  Our results indicate that our method provides feasible solutions when robust methods that exploit a min-max approach fail.

The main contributions of this paper are: 1) a method to efficiently compute the forward stochastic reachable set and  probability measure for systems with bounded, affine, stochastic disturbance, and 2) formulation of occupancy constraints, based on the FSR probability measure, as the union of convex sets, 
to generate probabilistic safe robot trajectories in presence of multiple stochastic, dynamic, rigid-body obstacles, using existing integer programming-based collision avoidance methods.

The paper is organized as follows: Section \ref{sub:problem_formulation} describes the problem formulation and mathematical preliminaries.  Section~\ref{sec:computeReach} formulates the forward stochastic reachability iteration for nonlinear as well as linear systems.  We apply our methods to the rigid-body obstacle avoidance problem in Section \ref{sec:motionPlanning}, and provide
conclusions and directions for future work in Section~\ref{sec:conc}.

% In this paper, we will demonstrate the use of forward stochastic reachability
% analysis in examples involving avoiding stochastically moving obstacles.
%%%%%%%%%%%%%%%%%%%%%%%%%%%%%%%%%%%%%%%%%%%%%%%%%%%%%%%%%%%%%%%%%%%%%%%%%%%%%%%%%%%%%%%

\section{Preliminaries and Problem formulation}
\label{sub:problem_formulation}

Consider the discrete-time time-invariant dynamical system,
\begin{align}
    \bx[t+1]&=f(\bx[t])+g(\bw[t])\label{eq:sys_orig}
\end{align}
with state $\bx[t]\in \mathcal{X}\subseteq \mathbb{R}^n$, disturbance $w[t]\in
\mathcal{W}\subseteq \mathbb{R}^p$, and Borel-measurable functions $f:
\mathbb{R}^n \rightarrow \mathbb{R}^n$ and $g: \mathbb{R}^p \rightarrow
\mathbb{R}^n$. We define the inital set $\mathcal I$, and an initial condition $\bx[0]\in \mathcal{I}\subseteq
\mathcal{X}$. The disturbance
set $\mathcal{W}$ is bounded and countable, and the
random vector $\bw[t]$ is defined in a probability space $(\mathcal{W},\sigma(
\mathcal{W}), \Prob_{\bw})$. We assume the random vector has a known probability
mass function. For a countable sample space, the probability measure
$\Prob_{ \bw}$ defines a probability mass function $\psi_{\bw}[\cdot]:
\mathbb{R}^p \rightarrow [0,1]$ such that for $\bar{z}=(z_1,z_2,\ldots,z_p)\in
\mathbb{R}^{p}$, $\Prob_{\bw}\{\bw=\bar{z}\}=\psi_{\bw}[\bar{z}]$. We define an indicator function $\mathbf{1}_{Y}(\bar{y}): \mathbb R^n \rightarrow \{0, 1\}$ such that it takes on the value 1 for $\bar{y} \in Y$ and 0 otherwise.  We use $| \cdot |$ to indicate cardinality. The $p \times p$ identity matrix is denoted $I_p$.
% We assume, without loss of generality, the empty set is the only member of the
% sigma-algebra $\sigma( \mathcal{W})$ of the disturbance random vector $\bw[t]$
% to have a zero probability of occurrence according to the probability measure $
% \Prob_{\bw}$.

The dynamics \eqref{eq:sys_orig} are quite
general, and include affine noise perturbed systems with known state-feedback
based inputs or open-loop controllers.  We assume that the disturbance
process $\bw[t]$ is an i.i.d. random process with respect to
time.  For a known initial condition, the state $\bx[t+1]$ is a random vector
due to $\bw[t]$. A random initial condition  $\bx[0]$ is defined in a probability space $(\mathcal{I},\sigma(\mathcal{I}), \Prob_{\bx[0]})$ with probability measure $\Prob_{\bx[0]}\{\bx[0]=\bar{z}\}= \psi_{\bx[0]}[\bar{z}]$. 
% [[For an initial condition with distribution $\Psi_{x_0}$ with [[add properties]]), we presume $\bar x_0 \in \mathcal I$.  ]]

% For 
% %,\bv\in \mathbb{R}^{p\times 1}$ 
% random vectors $\bw_1,\bw_1\in \mathbb{R}^{n}$ with probability densities $\psi_{\bw_1}$, $\psi_{\bw_2}$, respectively,
% %\begin{enumerate}
%     %\item[P1)]  
%     if $\bw=\bw_1+\bw_2$, then $\psi_{\bw}=\psi_{\bw_1}\ast\psi_{\bw_2}$, in which $\ast$ denotes the convolution
% operation.
% %\item[P2)] 
% Additionally, if $\bw_1$ and $\bw_2$ are independent vectors, then
%         $\bw=(\bw_1,\bw_2)$ has probability density
%         $\psi_{\bw}=\psi_{\bw_1}\psi_{\bw_2}$.
% %\end{enumerate}

%Note that the current formulation allows for the disturbance process to
%be dependent across time. The benefits of an additional assumption of
%independence across time will be discussed in Section.
%TODO: Possible future work with dependent random process.
% No conditions on $f$ is required for existence because it is a discrete time
% system.

By defining a random vector $\bv[t]=g(\bw[t])$ in the probability space $(
\mathcal{V}, \sigma( \mathcal{V}) , \Prob_{ \bv})$, 
$\eqref{eq:sys_orig}$ can be simplified to
\begin{align}
    \bx[t+1]=f(\bx[t])+\bv[t].\label{eq:sys}
\end{align}
%The nonlinear system $\eqref{eq:sys}$ has affine dynamics with respect
%to the noise $\bv[t]$ term.  
Given an initial condition $\bar{x}_0$ and a
sequence of random vectors $\{\bv[t]\}_{t=0}^{t=\tau}$, the trajectory of
$\eqref{eq:sys}$ is completely characterized by a random process defined
as $\bx[\tau]=\bxi(\tau;\bar{x}_0):[0,T] \rightarrow \mathcal{X}$. Therefore, the random
vector $\bx[\tau]$ is defined in the probability space $( \mathcal{X},\sigma(
\mathcal{X}), \Prob^{\tau,\bar{x}_0}_{\bx})$. Here, $\Prob^{\tau,\bar{x}_0}_{\bx}$ is
induced from the product measure of $\Prob_{ \bv}$ 
%given as $\Prob^{\tau,\bar{x}_0}_{\bx}={(\Prob_{\bv})}^\tau $ 
since $\bv[t]$ is an i.i.d random process.  When $f(\cdot)$ and $g(\cdot)$ are
linear transformations $A\in \mathbb{R}^{n\times n}$ and $B\in
\mathbb{R}^{n\times p}$, respectively, we have a linear time-invariant
system
\begin{align}
    \bx[t+1]&=A\bx[t]+B\bw[t],\label{eq:sys_linear}
\end{align}
and with $\bv[t]=B\bw[t]$, this becomes
\begin{align}
    \bx[t+1]&=A\bx[t]+\bv[t].\label{eq:sys_linear_v}
\end{align}
We are interested in determining those states that can be reached with non-zero probability, as well as the likelihood of reaching those states.
%Our first problem is:

For the discrete-time systems defined in \eqref{eq:sys} and
\eqref{eq:sys_linear_v}, we define the forward stochastic reach set as
\begin{align}
    \FSRset(\tau, \mathcal{I})&=\big\{\bar{y}\in
    \mathcal{X}\vert\exists\bar{x}_0\in \mathcal{I},{\{\bar{z}[t]\}}_{t=0}^{t=\tau}\mbox{ with }
    \nonumber \\
\Prob_{ \bv}\{\bv[t]=\bar{z}&[t]\}>0\ \forall t\in [0,\tau]\mbox{ s.t.
}\bxi(\tau;\bar{x}_0)=\bar{y}\big\}.\label{eq:FSRreach}
\end{align}
%TODO: We need an equivalent for uncountable sets too.
Here, ${\{\bar{z}[t]\}}_{t=0}^{t=\tau}$ is a realization of the random process
${\{\bv[t]\}}_{t=0}^{t=\tau}$ that can occur with non-zero probability and
$\bar{z}[\cdot]\in \mathbb{R}^n$.  We define the forward stochastic reach
probability measure (FSRPM) at time $t$ as the probability measure associated
with the state at time $t$, $\Prob^{t}_{\bx}$. For a countable disturbance set
$\mathcal{V}$ and $\bar{y}\in \mathbb{R}^n$, the FSRPM is defined by 
\begin{align}
    \Prob^{t}_{\bx}\{\bx[t]=\bar{y}\}=\sum_{\bar{z}\in
    \mathcal{I}}\Prob^{t,\bar{z}}_{\bx}\{\bx[t]=\bar{y}\}\Prob_{\bx[0]}\{\bx[0]=\bar{z}\}.\label{eq:Probtx}
\end{align}
The existence of a probability mass function for the
FSRPM \eqref{eq:Probtx} is guaranteed, since
the Borel-measurable functions in \eqref{eq:sys} preserve measurability.

%TODO: Include the reference to uncountable set definition too.
\begin{lem}\label{lem:FSRset}
For a countable set $ \mathcal{V}$, $\FSRset(t, \mathcal{I})=\{\bar{y}\in
\mathcal{X}:\psi_{\bx}[\bar{y};t]> 0\}$.
\end{lem}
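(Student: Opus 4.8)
The plan is to establish the set equality by proving the two inclusions separately, in each case unpacking the FSRPM through its defining expansion \eqref{eq:Probtx} together with the fact (noted just above the lemma) that $\Prob^{t,\bar z}_{\bx}$ is the push-forward of the $t$-fold product of $\Prob_{\bv}$ under the map that iterates the deterministic recursion $\bx[s+1]=f(\bx[s])+\bar z[s]$. Throughout I read the initial set $\mathcal{I}$ as the support of $\Prob_{\bx[0]}$, so that $\psi_{\bx[0]}[\bar z]>0$ for every $\bar z\in\mathcal{I}$ (without this reading the equality can fail at states reachable only from zero-mass initial conditions), and I write $\psi_{\bx}[\bar y;t]=\Prob^{t}_{\bx}\{\bx[t]=\bar y\}$ for the pmf whose existence follows from measurability of $f$ and $g$.

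First I would show $\FSRset(t,\mathcal{I})\subseteq\{\bar y:\psi_{\bx}[\bar y;t]>0\}$. Fix $\bar y$ in the reach set; by \eqref{eq:FSRreach} there are $\bar x_0\in\mathcal{I}$ and a realization $\{\bar z[s]\}_{s=0}^{t}$ with $\Prob_{\bv}\{\bv[s]=\bar z[s]\}>0$ for all $s$ and $\bxi(t;\bar x_0)=\bar y$ along this realization. Since $\bv[\cdot]$ is i.i.d., the event $\{\bv[s]=\bar z[s],\,s=0,\dots,t-1\}$ has probability $\prod_{s=0}^{t-1}\Prob_{\bv}\{\bv[s]=\bar z[s]\}>0$, and on this event the state recursion started at $\bar x_0$ arrives at $\bar y$; hence $\Prob^{t,\bar x_0}_{\bx}\{\bx[t]=\bar y\}>0$. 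Substituting into \eqref{eq:Probtx}, the summand indexed by $\bar z=\bar x_0$ is a product of two strictly positive numbers while every other summand is non-negative, so $\psi_{\bx}[\bar y;t]>0$.

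Next I would prove the reverse inclusion. Suppose $\psi_{\bx}[\bar y;t]>0$. The right-hand side of \eqref{eq:Probtx} is a \emph{countable} sum of non-negative terms, so at least one term is strictly positive, giving some $\bar z\in\mathcal{I}$ with $\psi_{\bx[0]}[\bar z]>0$ and $\Prob^{t,\bar z}_{\bx}\{\bx[t]=\bar y\}>0$. Writing this probability through the product-measure structure as a sum, over the set of length-$t$ disturbance realizations whose induced trajectory from $\bar z$ ends at $\bar y$, of $\prod_{s=0}^{t-1}\Prob_{\bv}\{\bv[s]=\bar z[s]\}$, and noting this index set is countable because $\mathcal{V}$ is, positivity of the sum again forces at least one such realization $\{\bar z[s]\}_{s=0}^{t-1}$ with every factor $\Prob_{\bv}\{\bv[s]=\bar z[s]\}>0$. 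Appending any $\bar z[t]$ of positive mass (the support of $\Prob_{\bv}$ is nonempty) produces exactly the witness demanded by \eqref{eq:FSRreach}, so $\bar y\in\FSRset(t,\mathcal{I})$.

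The crux of the argument, and the only place the hypothesis is genuinely needed, is the implication ``a positive countable sum of non-negative numbers has a positive summand,'' which is exactly what fails when $\mathcal{V}$ is uncountable: there $\bx[t]$ carries a density rather than a pmf and every individual realization has zero probability. The one piece of bookkeeping worth spelling out is the decomposition of $\Prob^{t,\bar z}_{\bx}\{\bx[t]=\bar y\}$ into the sum over disturbance realizations used above; for a fully self-contained proof this follows by a short induction on $t$ from \eqref{eq:sys} and the i.i.d. assumption, but it is essentially the observation already stated before the lemma that $\Prob^{\tau,\bar x_0}_{\bx}$ is induced from the product measure of $\Prob_{\bv}$.
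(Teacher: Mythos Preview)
Your proof is correct and, in fact, far more detailed than what the paper offers: the paper does not give a proof of this lemma at all, merely remarking that it ``arises by construction'' and that the reach set is the support of the FSRPM. Your two-inclusion argument is exactly the right way to make that remark rigorous, and your identification of where countability of $\mathcal{V}$ is genuinely used (to pass from positivity of a sum to positivity of a summand) is on point. Your explicit caveat that $\mathcal{I}$ must be read as the support of $\Prob_{\bx[0]}$ is also well taken; the paper later makes the analogous assumption for $\mathcal{W}$ (that only the empty set has zero mass) but never states it for $\mathcal{I}$, so without your reading the forward inclusion could indeed fail.
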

Lemma \ref{lem:FSRset} arises by construction, and asserts that the forward stochastic
reach set \eqref{eq:FSRreach} is the support of the corresponding
FSRPM \eqref{eq:Probtx}.
Note that the equality in Lemma~\ref{lem:FSRset} would be
\emph{almost sure} if the additional restriction of $ \Prob_{
\bv}\{\bv[t]=\bar{z}[t]\}>0\ \forall t\in [0,\tau]$ were not imposed in
\eqref{eq:FSRreach}.

\begin{prob}
   Given the affine, stochastic dynamics (\ref{eq:sys}), initial condition $\bx[0]\in\mathcal{I}$ and its distribution $\psi_{\bx[0]}$, disturbance set $\mathcal{V}$, disturbance probability mass function $\psi_{\bv}[\cdot]$, compute the forward stochastic reach set $\FSRset(t, \mathcal{I})$
   %which are the states that can be reached in future with non-zero probability, 
   and the forward stochastic reach probability measure $\psi_{\bx}[\cdot;t]$ at time $t$, in an iterative fashion.  
   %, which is the probability of reaching those states.
   \label{prob:basic}
\end{prob}

We are motivated by problems in dynamic, stochastic obstacle avoidance.  Specifically, we wish to describe those states which are associated with a likelihood of collision with a rigid-body obstacle that is at or above a level $\alpha \in [0,1]$.  For a single obstacle scenario, this is the $\alpha$-superlevel set of the obstacle's {\em occupancy function}, to be defined precisely later.  
%For a multiple obstacle scenario, we generalize this to 
We require a computationally tractable formulation of the superlevel set of the occupancy function, that enables the use of integer programming based methods for obstacle avoidance. We seek to then generalize this method to handle multiple dynamic, stochastic obstacles, as well.
%\cite{schouwenaars2002safe,schouwenaars2001mixed,mellinger2012mixed}.  
%That is, we overapproximate the avoid set as a union of convex sets and exploit 

% \begin{prob}\label{prob:prob}
% %    Given the initial position, the FSR sets and its probability measure of each obstacles, design a finite controller for a robot that reaches its goal while assuring probabilistic safety guarantee in presence of multiple stochastically moving rigid body obstacles.
%  Construct a computationally tractable formulation of the superlevel set of the FSRPM for a single, dynamic, stochastic obstacle, that is represented by rigid bodies with convex geometry, with known initial position and stochastic dynamics, that is, represent $\{\bx : \text{FSPRM} \geq \alpha\}$  as a union of convex sets for each obstacle at each instant $t \in [0, T]$.  
% \end{prob}

% \begin{prob}
% %[[Create subproblem environment; same as Problem but needs a sub-numbered scheme.]]
% Reconsider Problem \ref{prob:prob} with multiple dynamic, stochastic obstacles that are represented by rigid bodies with convex geometry, and construct an overapproximation of the superlevel set of the FSRPM that is a union of convex sets for each obstacle.
% \end{prob}
\begin{prob}\label{prob:prob}
%    Given the initial position, the FSR sets and its probability measure of each obstacles, design a finite controller for a robot that reaches its goal while assuring probabilistic safety guarantee in presence of multiple stochastically moving rigid body obstacles.
Construct a computationally tractable formulation of the superlevel set of the occupancy function for a rigid-body obstacle with stochastic dynamics and convex geometry, and known initial position.  That is, represent the $\alpha$-superlevel set of the occupancy function, or an overapproximation of the $\alpha$-superlevel set of the occupancy function, as a union of convex sets at each instant $t \in [0, T]$.  
\end{prob}

\begin{prob}\label{prob:prob3}
%[[Create subproblem environment; same as Problem but needs a sub-numbered scheme.]]
Reconsider Problem \ref{prob:prob} for multiple rigid-body obstacles with convex geometry, and construct an overapproximation of the $\alpha$-superlevel set of the joint occupancy function that is a union of convex sets for each obstacle.
\end{prob}

%%%%%%%%%%%%%%%%%%%%%%%%%%%%%%%%%%%%%%%%%%%%%%%%%%%%%%%%%%%%%%%%%%%%%%%%%%%%%%%%%%%%%%%%%%%%%%%%%%%%%%%%%%%%%
\section{Forward stochastic reachability analysis}
\label{sec:computeReach}

%We perform the forward stochastic reachability analysis in the inertial coordinate frame. 

\subsection{Nonlinear, affine dynamical system}
\label{sub:nonlin}

% The system defined in \eqref{eq:sys} is a nonlinear discrete-time time-invariant
% system with an affine disturbance. Motivated by the definition
% \eqref{eq:FSRreach},
% %TODO: Include the uncountable version.
We assume, without loss of generality, that the empty set is the only member of the
sigma-algebra $\sigma( \mathcal{W})$ of the disturbance random vector $\bw[t]$
to have a zero probability of occurrence according to the probability measure $
\Prob_{\bw}$.

We additionally note that for 
random vectors $\bw_1,\bw_1\in \mathbb{R}^{n}$ with probability densities $\psi_{\bw_1}$, $\psi_{\bw_2}$, respectively,
\begin{enumerate}
    \item[P1)]  
    If $\bw=\bw_1+\bw_2$, then $\psi_{\bw}=\psi_{\bw_1}\ast\psi_{\bw_2}$, in which $\ast$ denotes the convolution
operation.
\item[P2)] 
   If $\bw_1$ and $\bw_2$ are independent vectors, then
        $\bw=(\bw_1,\bw_2)$ has probability density
        $\psi_{\bw}=\psi_{\bw_1}\psi_{\bw_2}$.
\end{enumerate}

The following theorem characterizes the FSR and the FSRPM using two recursive relations.
%TODO:bounded finite For the most general analysis, we define the stochasticity
%of disturbance to be dependent on the current state of the system.  We define
%the probability mass function of the random vector $\bv$ as
%$\psi_{\bv}[\bar{z}]= \Prob_{ \bv}\{\bv=\bar{z}\}$.
\begin{thm}\label{thm:FSR_nonlinear}
   Given the dynamics \eqref{eq:sys}, an initial condition set $ \mathcal{I}$,
   a probability mass function $\psi_{\bx[0]}[\cdot]$ over $ \mathcal{I}$, and a
   countable disturbance set $ \mathcal{V}=g( \mathcal{W})$, for every $t\in[0,T-1]$,
   \begin{align}
       \FSRset(t+1, \mathcal{I})&=f\Big(\FSRset(t, \mathcal{I})\Big)\oplus
       \mathcal{V}\label{eq:FSRset_recursive} \\
       \psi_{\bx}[\bar{y};t+1]&=\Big(\psi_{f(\bx)}[\cdot;t]\ast\psi_{\bv}[\cdot]\Big)[\bar{y}]\label{eq:FSRPM_recursive}
   \end{align}
%   For an uncountable disturbance set $ \mathcal{V}$ and a probability density
%   function over $ \mathcal{I}$ $\psi_{\bx[0]}(\cdot)$,
%   \begin{align}
%       \psi_{\bx}(\bar{y};t)&=\Big(\psi_{f(\bx)}(\cdot;t-1)\ast\psi_{\bv}(\cdot)\Big)(\bar{y}).\label{eq:FSRPM_recursive_cts}
%   \end{align}
\end{thm}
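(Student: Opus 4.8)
The plan is to prove the probability-mass recursion \eqref{eq:FSRPM_recursive} first, via an independence argument combined with property P1, and then obtain the set recursion \eqref{eq:FSRset_recursive} either directly from the definition \eqref{eq:FSRreach} or as a corollary of \eqref{eq:FSRPM_recursive} together with Lemma~\ref{lem:FSRset}.

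For \eqref{eq:FSRPM_recursive}, the first step is to observe, by unrolling \eqref{eq:sys}, that $\bx[t]=\bxi(t;\bx[0])$ is a measurable function of $\bx[0]$ and of $\bv[0],\ldots,\bv[t-1]$ only. Since $\{\bv[t]\}$ is an i.i.d.\ process and $\bx[0]$ is independent of the disturbance, it follows that $\bx[t]$, and hence $f(\bx[t])$ by measurability of $f$, is independent of $\bv[t]$. Because $\mathcal{V}$ is countable, $f(\bx[t])$ and $\bv[t]$ admit probability mass functions $\psi_{f(\bx)}[\cdot;t]$ and $\psi_{\bv}[\cdot]$, and $\bx[t+1]=f(\bx[t])+\bv[t]$ is therefore a sum of two independent random vectors. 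Applying P1 yields $\psi_{\bx}[\cdot;t+1]=\psi_{f(\bx)}[\cdot;t]\ast\psi_{\bv}[\cdot]$; equivalently, conditioning on the value of $\bv[t]$, for every $\bar y$,
\[
\psi_{\bx}[\bar y;t+1]=\sum_{\bar z\in\mathcal{V}}\psi_{f(\bx)}[\bar y-\bar z;t]\,\psi_{\bv}[\bar z].
\]

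For \eqref{eq:FSRset_recursive}, I would argue directly from \eqref{eq:FSRreach}. A point $\bar y$ lies in $\FSRset(t+1,\mathcal{I})$ iff there exist $\bar x_0\in\mathcal{I}$ and a sequence $\{\bar z[s]\}_{s=0}^{t}$ with $\Prob_{\bv}\{\bv[s]=\bar z[s]\}>0$ for all $s$ such that $\bxi(t+1;\bar x_0)=\bar y$ under that realization. Splitting off the last step, $\bxi(t+1;\bar x_0)=f\big(\bxi(t;\bar x_0)\big)+\bar z[t]$, where $\bxi(t;\bar x_0)$ is generated by $\{\bar z[s]\}_{s=0}^{t-1}$; hence $\bxi(t;\bar x_0)\in\FSRset(t,\mathcal{I})$ and $\bar z[t]\in\mathcal{V}$, using the standing assumption that the only zero-probability event is the empty set, so that $\psi_{\bv}[\bar z]>0\Leftrightarrow\bar z\in\mathcal{V}$. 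Conversely, any $\bar y=f(\bx')+\bar z$ with $\bx'\in\FSRset(t,\mathcal{I})$ and $\bar z\in\mathcal{V}$ is reached by appending $\bar z$ to a disturbance sequence witnessing $\bx'$. This is precisely $\bar y\in f\big(\FSRset(t,\mathcal{I})\big)\oplus\mathcal{V}$. (Alternatively, combining the displayed sum with Lemma~\ref{lem:FSRset}: $\psi_{\bx}[\bar y;t+1]>0$ iff some summand is positive iff there is $\bar z\in\mathcal{V}$ with $\psi_{f(\bx)}[\bar y-\bar z;t]>0$; and $\psi_{f(\bx)}[\bar w;t]>0$ iff $\bar w\in f\big(\FSRset(t,\mathcal{I})\big)$, which again gives the Minkowski sum.)

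The main obstacle is the independence step: carefully justifying that $\bx[t]$, and therefore $f(\bx[t])$, is independent of $\bv[t]$, which rests on the i.i.d.\ assumption and on $\bx[t]$ depending only on the initial condition and the past disturbances, together with the minor measure-theoretic checks that $f(\bx[t])$ has a well-defined probability mass function and that the convolution sum ranges over the countable set $\mathcal{V}$. The remainder is bookkeeping around the definitions \eqref{eq:FSRreach}, \eqref{eq:Probtx}, and Lemma~\ref{lem:FSRset}.
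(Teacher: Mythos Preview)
Your proposal is correct and follows essentially the same approach as the paper: the set recursion is obtained directly from the definition \eqref{eq:FSRreach} together with the standing assumption that every nonempty member of $\sigma(\mathcal{V})$ has positive probability, and the mass-function recursion is obtained by applying P1 to $\bx[t+1]=f(\bx[t])+\bv[t]$. Your write-up is in fact more careful than the paper's terse proof, since you make explicit the independence of $f(\bx[t])$ and $\bv[t]$ (via the i.i.d.\ assumption and the fact that $\bx[t]$ depends only on the past), which is implicitly required for P1 but not spelled out there.
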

\begin{proof}
   Equation \eqref{eq:FSRset_recursive} follows from \eqref{eq:FSRreach} and the
   assumption that all non-empty members of $\sigma(\mathcal{V})$ have non-zero
   probability of occurrence. 
   Equation \eqref{eq:FSRPM_recursive} follows from the observation that
   $f(\bx[t])$ is a random vector for all $t\in[0,T]$ and Property P1.
\end{proof}

Note that \eqref{eq:FSRset_recursive} is identical to the propagation of
%(deterministic) 
reachable sets as in ~\cite{kvasnica2015reachability}.
%, which provides all the states that are
%reachable by the system (\ref{eq:sys}) at a particular time.
When $ \mathcal{V}$ is bounded, the forward stochastic reach sets can be
computed using existing tools for reachability, such as the
multi-parametric toolbox (MPT)~\cite{MPT3} and ellipsoidal toolbox (ET)~\cite{ellipsoid}.
We can also use Lemma~\ref{lem:FSRset} to compute these sets from their
corresponding probability measures in \eqref{eq:FSRPM_recursive}.

For the probability measure, from the definition of convolution and assumption of a countable disturbance set
$ \mathcal{V}$, we expand \eqref{eq:FSRPM_recursive} as
\begin{align}
    \psi_{\bx}[\bar{y};t]&=\sum_{\bar{z}\in
\mathcal{V}}\psi_{f(\bx)}[\bar{y}-\bar{z};t]\psi_{\bv}[\bar{z}]. \label{eq:FSRPM_recursive_sum}
\end{align}
%The probability mass function $\psi_{f(\bx)}[\cdot;t]$ is given as
with
\begin{align}
    \psi_{f(\bx)}[\bar{y}-\bar{z};t]&=\sum_{\bar{x}\in\FSRseta}
    \psi_{\bx}[\bar{x};t] \label{eq:FSRPM_psif}
\end{align}
and $\FSRseta=\{\bar{x}\in\FSRset(t, \mathcal{I})\vert
f(\bar{x})=\bar{y}-\bar{z}\}$.  Compared to \eqref{eq:Probtx}, equations
\eqref{eq:FSRPM_recursive_sum} and \eqref{eq:FSRPM_psif} provide a recursive
relation for $\psi_{\bx}[\cdot;t]$. For a countable 
%bounded or unbounded
disturbance set, \eqref{eq:FSRPM_recursive_sum} provides the exact FSR set and its probability measure.  

% For an uncountable but bounded disturbance set, 
% obtaining $\psi_{f(\bx)}(\cdot;t-1)$ in \eqref{eq:FSRPM_recursive_cts} for a nonlinear transformation
% $f(\cdot)$ requires additional restrictions on $f(\cdot)$ like
% monotonicity and differentiability~\cite[Section
% 20]{billingsley_probability_1995}. 

%We can define
%$\psi_{f(\bx)}(\cdot;t-1)$ similar to \eqref{eq:FSRPM_psif} but it would
%require invertibility of $f$ to define $\FSRseta$. 
%Therefore, we approximate
%\eqref{eq:FSRPM_recursive} with \eqref{eq:FSRPM_recursive_sum} by gridding the
%disturbance space $ \mathcal{V}$ and the state space $ \mathcal{X}$, and
%``discretizing'' the density function $\psi_{\bv}(\cdot)$ as done
%in~\cite{abate_computational_2007}.  For an unbounded and uncountable
%disturbance set, since $\psi_{f(\bx)}(\cdot;t)$ is not readily available, a
%bounded disturbance set $\underline{\mathcal{V}}\subset \mathcal{V}$ may be used
%to obtain an under-approximation of the reachability analysis. Hence, for the
%uncountable disturbance set, the approximate reachability analysis is done by
%using a discrete random vector to approximate the continuous random
%vector.

We summarize our solution to Problem~\ref{prob:basic} in
Algorithm~\ref{algo:FSR_thm1} for the system \eqref{eq:sys} with a countable
disturbance set $ \mathcal{V}$.
\begin{algorithm}[h]
    \caption{Forward stochastic reachable set and probability measure (Problem~\ref{prob:basic})
        for a system \eqref{eq:sys} with a countable disturbance set $\mathcal{V}$}\label{algo:FSR_thm1}
    \begin{algorithmic}[1]
        \Require Dynamics $f$, Initial set $\mathcal{I}$, Initial probability
        mass function $\psi_{\bx[0]}[\cdot]$, disturbance set $ \mathcal{V}$, disturbance probability mass function $\psi_{\bv}[\cdot]$, time instant of interest $\tau$
        \Ensure Forward stochastic reach set $\FSRset(\tau, \mathcal{I})$ and
        probability measure $\psi_{\bx}[\cdot;\tau]$%$RTube$ -- the forward reach tube as defined in $\eqref{eq:Rtube}$
        %\Procedure{FRSCompute}{$f,\mathcal{I},\psi_{\bx[0]}(\cdot),\mathcal{V},\tau$}
%            \State $(q[0],I_x)\gets I$ \Comment{Extract from ${I}$, ${q[0]}$ and
%            ${RS\subseteq \mathbb{X}}$}
            \State $t\gets 1$ 
            \State $\FSRset(0, \mathcal{I}) \gets \mathcal{I}$
            \State $\psi_{\bx}[\cdot;0] \gets \psi_{\bx[0]}(\cdot)$
            \While{$t \leq \tau$}
                %\State $\Prob^{i}_{\bx}\{\bx[i] = \bar{y}\}= \sum_{\bar{z}\in
                %\mathcal{I}} \Prob^{i,\bx[0]}_{\bx}\{\bx[i]=\bar{y}\}
                %\Prob_{\bx[0]}\{\bx[0]=\bar{z}\}$
                \State $\psi_{\bx}[\cdot;t] \gets 0$\Comment{Initialize FSRPM to zero}
                \ForAll{$\bar{x} \in \FSRset(t-1, \mathcal{I}),\ \bar{z} \in \mathcal{V}$} %blah
                    \State $\psi_{\bx}[f(\bar{x})+\bar{z};t] \gets
        \psi_{\bx}[f(\bar{x})+\bar{z};t]+\psi_{\bx}[\bar{x};t-1]\psi_{\bv}[\bar{z}]$
        \Comment{ Equations \eqref{eq:FSRPM_recursive_sum} and
        \eqref{eq:FSRPM_psif}}
                \EndFor
                \State $\FSRset(t, \mathcal{I}) \gets f(\FSRset(t-1,
                \mathcal{I}))\oplus \mathcal{V}$%\{\bar{y}\in \mathcal{X} | \psi_{\bx}[\bar{y};t]>0 \}$ \Comment{Lem.~\ref{lem:FSRset}}
                \State ${t\gets t+1}$ \Comment{Update iteration variable}
            \EndWhile
       % \EndProcedure
  \end{algorithmic}
\end{algorithm}
%\begin{itemize}
%    \item[Done] General nonlinear systems with affine disturbance that can be
%        dependent on the current state.
%    \item[Done] Separation of stochasticity from deterministic features for countable
%        disturbance set
%    \item[Done] For uncountable disturbance set, this separation is not possible.
%        Therefore, Subsection~\ref{sub:S2}.
%    \item[Done] Gridding required.
%\end{itemize}
% subsection For systems S1 (end)
%=============================================================================================================
%===============================================================================================================
\subsection{Comparisons to the dynamic programming approach}
\label{sub:DP}

\begin{figure*}[h!]
    \centering
    \newcommand{\figw}{0.3}
    \begin{subfigure}[t]{\figw\linewidth}
        \centering
        \includegraphics[width=0.92\linewidth]{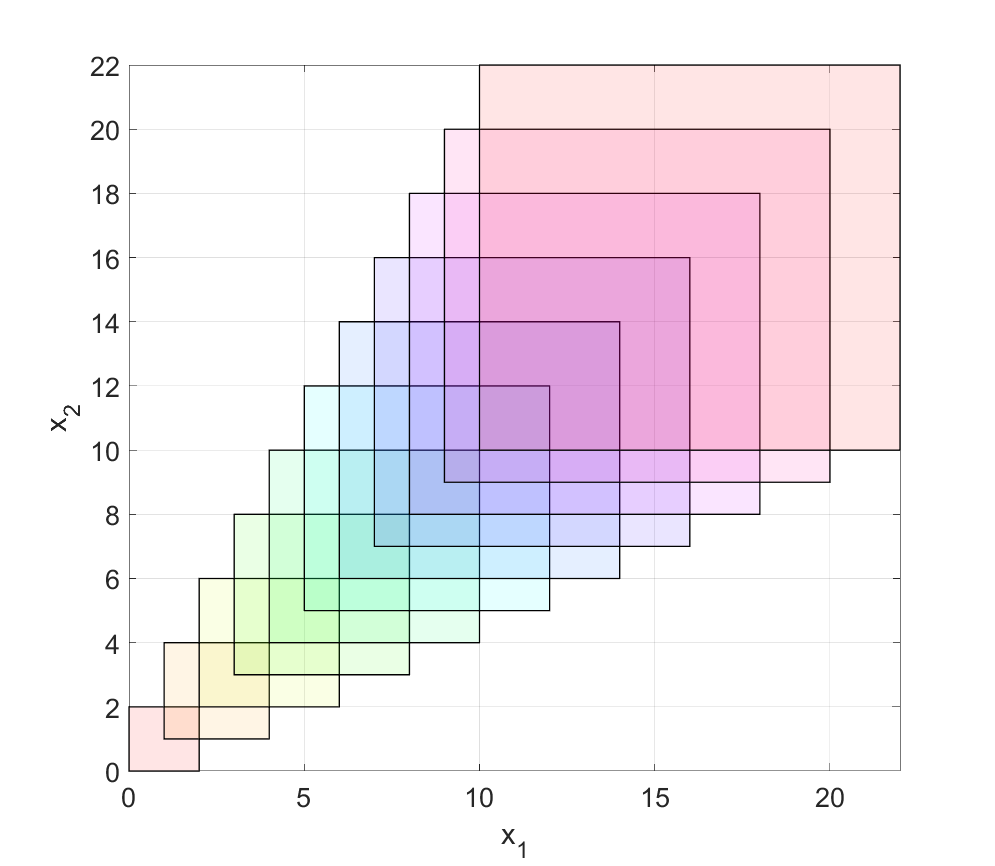}
        \caption{}
        \label{fig:FSReachSetPM}
    \end{subfigure}%
    ~ 
    \begin{subfigure}[t]{\figw\linewidth}
        \centering
        \includegraphics[width=1\linewidth]{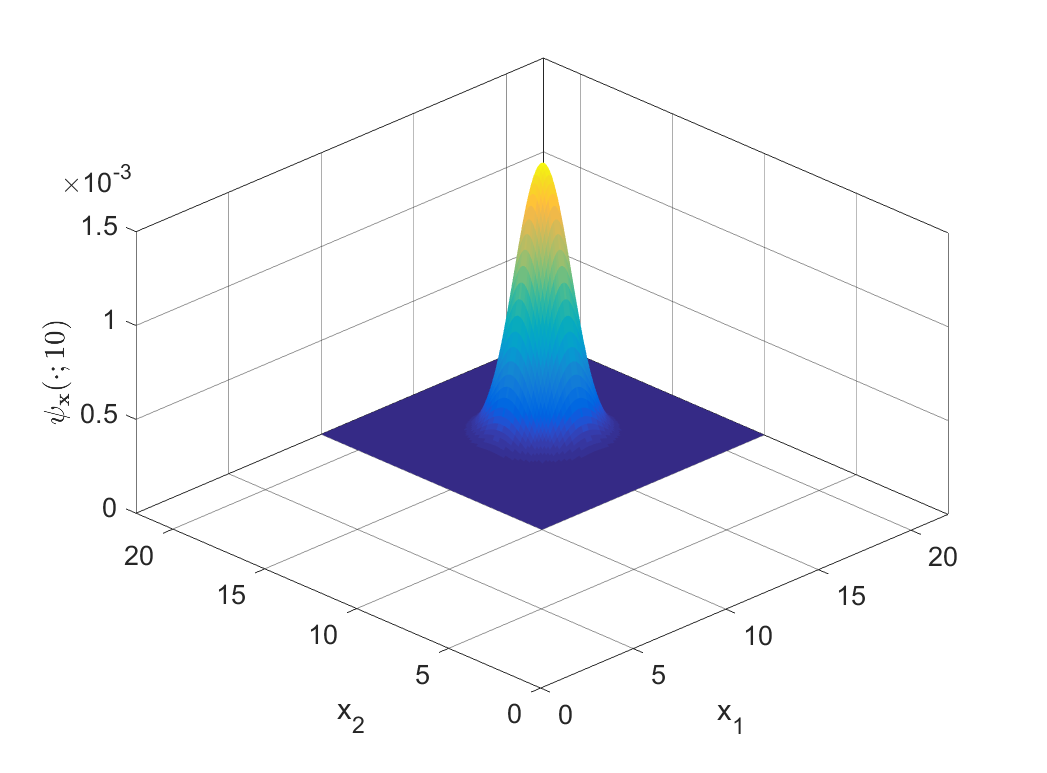}
        \caption{}
        \label{fig:FSRPMPM}
    \end{subfigure}%
    \begin{subfigure}[t]{\figw\linewidth}
        \centering
        \includegraphics[width=0.92\linewidth]{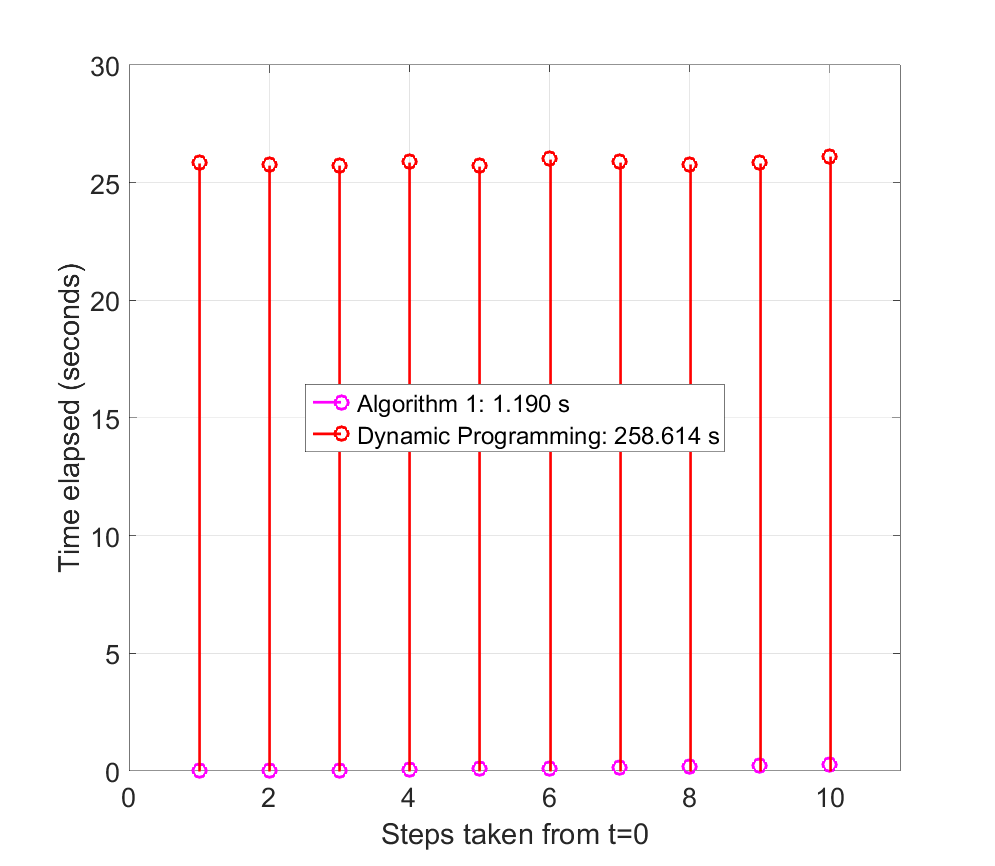}
        \caption{}
        \label{fig:FSRCompute}
    \end{subfigure}%
\caption{(a) Forward stochastic reach sets over time,  (b) forward stochastic
    reach probability measure at $t=10$, (c) comparison of run times for
Algorithm~\ref{algo:FSR_thm1} and the dynamic programming approach. The analysis
was done for the system given in
\eqref{eq:sys_PM_example}.\label{fig:FSR_example_PM}}
\end{figure*}

The dynamic programming formulation provided in~\cite{summers_verification_2010}
computes the backward stochastic reachable set for control objectives
involving safety.  It allows for calculation of either stochastic reachable or 
stochastic viable sets, and simultaneously constructs an optimal control input. 
While Problem~\ref{prob:basic} can be posed as a backward reach
problem when the dynamics are reversed in time (subject to the existence of the backward
dynamics, when 1) $f(\cdot)$ is invertible, and 2) $\bv[t]$ is
an i.i.d process)~\cite{Mitchell2007}, 
the solution to
Problem~\ref{prob:basic} does not require computation of the optimal control
action, significantly simplifying calculation. 
%one of the key features we exploit is the lack of control input in (\ref{eq:sys}.
%. Clearly, the backward dynamics of \eqref{eq:sys}
%exist if: 1) the dynamics $f$ is invertible, and 2) the noise $\bv[t]$ is
%an i.i.d process. 
%
%The formulation in~\cite{summers_verification_2010} synthesizes a controller
%that ensures the highest probability of reaching target while avoiding an unsafe
%set. 
%For the dynamics given in $\eqref{eq:sys}$, the solution to
%Problem~\ref{prob:basic} does not require the computation of the optimal control
%action. 
%This relaxation allows computing the forward stochastic reachable sets forward
%in time as opposed to backwards in time as dictated by the Bellman's optimality
%principle. 
Algorithm~\ref{algo:FSR_thm1} also iterates over only those states for which the FSRPM is
positive.  Therefore, at every time instant $t$, it propagates the dynamics over a
smaller region of the state space $\FSRset(t, \mathcal{I})$ as compared to
dynamic programming.

To demonstrate, 
%the computational advantage of Algorithm~\ref{algo:FSR_thm1},
consider a point mass dynamics discretized in time with velocities drawn from a
truncated Gaussian distribution, 
\begin{align}
    \bx[t+1]&=\bx[t]+B_\mathrm{ex,PM}\bw[t] \label{eq:sys_PM_example} \\
    \bw[t] &\sim \mathcal{N}_\mathrm{truncated,
\mathcal{W}}(\bar{\mu},\Sigma)  \nonumber
\end{align}
with state $\bx[t]\in \mathbb{R}^2$, disturbance $\bw[t]$ is a random vector
taking values in $[1,2]^2$ following a truncated Gaussian density with mean
$\bar{\mu}=[1.5\ 1.5]^\top$ and covariance matrix $\Sigma=0.1I_2$ and
$B_\mathrm{ex,PM}=I_2$.  We define the initial set as $\mathcal{I}=[0,2]^2$, and 
$\psi_{\bx[0]}$ as a uniform distribution over $\mathcal{I}$. 
We use Algorithm~\ref{algo:FSR_thm1}
%are interested
%in the forward stochasticity reachability analysis for $t=10$. Even though the
%system in \eqref{eq:sys_PM_example} is linear, we use
to compute $\FSRset(t, \mathcal{I})$ and $\psi_{\bx}[\cdot;t]$ for $t \in [0, 10]$ seconds. 
%as shown in Figure ~\ref{fig:FSReachSetPM} and ~\ref{fig:FSRPMPM}, respectively.  
Figure~\ref{fig:FSReachSetPM} shows the evolution
of $\FSRset(\cdot, \mathcal{I})$ over time (plotted using MPT~\cite{MPT3}),
Figure~\ref{fig:FSRPMPM} shows the FSRPM for the system at $t=10$ and
Figure~\ref{fig:FSRCompute} compares the runtime of the
Algorithm~\ref{algo:FSR_thm1} with the dynamic programming approach, with gridding of the state-space and disturbance 
with a grid size of $0.1$ in each dimension. 
%As implied by Theorem~\ref{thm:FSRPM_log_concave}, the FSRPM at $t=10$ is
%log-concave. The FSRPM can be approximated as a truncated Gaussian due to the
%small value of $\vert\Sigma\vert$ and Proposition~\ref{prop:FSRPM_gauss}.
%Also, we 

%% This comment requires the theory developed for linear systems.
% We see that all the forward stochastic reach sets in
% Figure~\ref{fig:FSReachSetPM} are convex as given by
% Theorem~\ref{thm:FSRset_convex}.

All computations in this paper were performed using MATLAB on an Intel Core i7
CPU with 2.10GHz clock rate and 8 GB RAM.

\subsection{Rigid body obstacles}
\label{sub:rigid_body}

We first extend the FSR set from point-mass obstacles (Subsection \ref{sub:nonlin}) to rigid body obstacles.
%showed how the FSR set and its probability measure can be computed for point mass obstacles. In this subsection, we extend the FSR set discussion to rigid body obstacles. 

Presume that the center of mass (referred to as the center, in shorthand) of the rigid body obstacle is described by
$\bx[\cdot]$.  
%Algorithm \ref{algo:FSR_thm1} then describes the forward stochastic reach set of the center $\FSRset(t,\mathcal{I})$ and the corresponding probability measure $\psi_{\bx}^c[\cdot;t]$ 
%(superscript $c$ denoting the center). 
The set $O(\bx[t])\subseteq \mathcal{X}$ describes 
set of states occupied by the obstacle at time $t$ when the obstacle's
center is $\bx[t]$, 
that is, $O(\bx[t]) = \{ y \: | \: h(y-\bx) \geq 0\}$ for some function $h: \mathbb R^n \rightarrow \mathbb R$ which implicitly describes the geometry of the obstacle.  For example, for a unit square obstacle, one possible geometry function is $h(z) = 1-\|z\|_{\infty}$.

We define an occupancy function of an obstacle $\phi^r_{\bx}(\bar{y};t):  \mathcal{X} \times[0,T] \rightarrow [0,1]$ to evaluate the
probability of a point $\bar{y}\in \mathcal{X}$ being covered by the rigid
body obstacle. 
% [[Is this just one obstacle in the following definition, or multiple obstacles?  Needs to be explicit.]]
%The occupancy function $\phi^r_{\bx}$ is given by the probability
%of the obstacle occupying centers $\bar{z}\in \FSRset(t,\mathcal{I})$ such
%that $\bar{y}$ lies in the rigid body, 
\begin{align}
    \phi_{\bx}^{r}(\bar{y};t) &= \psi_{\bx}[\left\{\bar{z}\in
\FSRset(t,\mathcal{I})\vert\bar{y}\in O(\bar{z})\right\};t] \nonumber \\
&=\sum_{\bar{z}\in \FSRset(t, \mathcal{I})}
\psi_{\bx}[\bar{z};t]\textbf{1}_{O(\bar{z})}(\bar{y}). \label{eq:prob_rigid}
\end{align}
The description (\ref{eq:prob_rigid}) follows from the inclusion-exclusion principle and the observation that the states of the rigid body centers are mutually exclusive events. 
    % Thresholding is done on the probability of the union of these events => Summation. \mytodo{Added reasoning for (12).}

The occupancy function provides the collision probability with the rigid body obstacle. 
Note that the occupancy function is not a probability measure
since $\sum_{\bar{y}\in \mathcal{X}}\phi_{\bx}^{r}(\bar{y};t)\neq 1$. Also, the occupancy function lies in the interval $[0,1]$ since it is a sum of
nonnegative numbers, and it is upper bounded by $\sum_{\bar{z}\in \FSRset(t,
\mathcal{I})}\psi_{\bx}[\bar{z};t]=1$.

We denote the $\alpha$-superlevel set of the occupancy function as 
    \begin{align}
        S_\alpha(t;\phi^r_{\bx})&=\{\bar{y}\in \mathcal{X}\vert
    \phi^r_{\bx}(\bar{y};t)\geq \alpha\}\label{eq:sup_level_set}
    \end{align}
For $\alpha >0$,  $S_\alpha(t;\phi^r_{\bx})$ is the ``avoid'' set for obstacle avoidance problems when probabilistic safety must be assured with at least likelihood $\alpha$.  Note that $S_0(t;\phi^r_{\bx})$ is equivalent to the conservative avoid set generated by worst-case reachability formulations \cite{wu2012guaranteed,chung2006coordinated,althoff2014online,lin2015collision}.  
% The min-max set retured 
% In this formulation, the min-max or the robust avoid set would be the sets $\{\bar{y}\in \mathcal{X}|\phi_{\bx}^{r}(\bar{y};t)>0\}$, while a relaxed avoid set would be $\{\bar{y}\in \mathcal{X}|\phi_{\bx}^{r}(\bar{y};t)>\alpha\}$ (also known as $\alpha$ superlevel sets of $\phi_{\bx}^{r}(\cdot;\cdot)$), which allows the robot to be in states where the probability of collision is less than or equal to $\alpha$. 

To utilize existing integer-programming based obstacle avoidance methods, we must have $\alpha$-superlevel sets of the occupancy function to be convex, or a union of convex sets. 
%Below, we analyze the convexity properties of the $\alpha$ superlevel sets of $\phi_{\bx}^{r}(\cdot;\cdot)$. The following Proposition solves the Problem \ref{prob:prob}.
% [[potential overlap -- 
% this needs to be a succinct, intuitive description for $D$ or $\mathcal D$]]

We first define 
a function $D_j(\alpha, t)$
which describes the underlying cause of an occupancy function taking a value above the threshold $\alpha$.
%a potential overlap of states $\{O(\bar{z}_1)\cap O(\bar{z}_2)\neq \emptyset \} $  between two center points $\bar{z}_1, \bar{z}_2 \in \FSRset(t,\mathcal{I})$ that arises when propagating occupancy of a rigid body obstacle.  
Given %the FSR set of an obstacle at time $t$ 
$\FSRset(t,\mathcal{I})$, let
\begin{align}
D_j(\alpha,t) &= \{\bar{z}_k \in \FSRset(t,\mathcal{I}) : \sum_k\psi_{\bx}[\bar{z}_k;t]>\alpha  \notag\\
& \text{and} \cap_k O(\bar{z}_k) \neq \emptyset\} \label{eq:Dj}\\
j &\in \{1,2,3...2^{|\FSRset(t,\mathcal{I})|}\} \notag
\end{align}
be a set of possible rigid body centers, whose corresponding rigid bodies create overlap with an associated probability of obstacle occupancy greater than $\alpha$  \eqref{eq:prob_rigid}. 
%
%There can be a total of $2^{|\FSRset(t,\mathcal{I})|}$ such $D_j(\alpha,t)$ sets ( $|\FSRset(t,\mathcal{I})|$ is the cardinality of the set $\FSRset(t,\mathcal{I})$), and 
We denote $\mathscr{D}_z(\alpha,t)$ the collection of all such sets at time $t$.

To demonstrate, consider the scenario shown in Figure \ref{fig:dz}.  Overlap in possible obstacle 
positions $\bar{z}_1, \bar{z}_2$ generates a region of the state-space where likelihood of collision 
is higher than $\alpha$, even though $\psi_{\bx}[\bar{z}_1;t], \psi_{\bx}[\bar{z}_2;t] < \alpha$.  
We denote this region, as well as other regions (e.g., $\psi_{\bx}[\bar{z}_3;t]$) with likelihood higher than $\alpha$ through $D_1(\alpha, t) = \{\bar{z}_1,\bar{z}_2\}$ and $D_2(\alpha, t) = \{\bar{z}_3\}$.  Essentially, $D_j(\alpha, t)$ identifies the relevant obstacles through their centers.

% For example, in Figure \ref{fig:dz}, 
% For rigid body obstacles, this could occur for multiple reasons, as shown in Figure \ref{fig:dz}.  To account for the higher likelihood of collision associated with overlap 
% of the obstacle body (e.g., $z_1$, $z_2$)

% The potential overlap that results from possible obstacle positions results in likelihood of collision above the threshold $\alpha$.  For $\psi_{\bx}^{c}[\bar{z}_1;t],\psi_{\bx}^{c}[\bar{z}_2;t]<\alpha$ note that $\psi_{\bx}^{c}[\bar{z}_1;t]+\psi_{\bx}^{c}[\bar{z}_2;t],\psi_{\bx}^{c}[\bar{z}_3;t]\geq\alpha$, and $\mathscr{D}_z(\alpha,t)=\{\{\bar{z}_1,\bar{z}_2 \}, \{\bar{z}_3\}\}$ and 

\begin{prop}
    For a single rigid body $O(\cdot)$ that is convex, the $\alpha$-superlevel sets of the occupancy
    function $\phi^r_{\bx}[\bar{y};t]$ (\ref{eq:prob_rigid}) is a union of convex
    sets.\label{prop:cvx_union_obs}
\end{prop}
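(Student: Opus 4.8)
The plan is to express the $\alpha$-superlevel set $S_\alpha(t;\phi^r_{\bx})$ explicitly as a finite union, indexed by the sets $D_j(\alpha,t)\in\mathscr{D}_z(\alpha,t)$, of intersections of translated copies of the (convex) obstacle body. Since $\FSRset(t,\mathcal{I})$ is countable and, for fixed $t$, effectively supported on a finite collection of points $\{\bar{z}_1,\bar{z}_2,\dots\}$ carrying positive probability mass, the occupancy function $\phi^r_{\bx}(\bar y;t)=\sum_k \psi_{\bx}[\bar z_k;t]\mathbf{1}_{O(\bar z_k)}(\bar y)$ is a piecewise-constant function whose value at $\bar y$ equals the total mass of those centers $\bar z_k$ with $\bar y\in O(\bar z_k)$. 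Hence $\phi^r_{\bx}(\bar y;t)\geq\alpha$ holds exactly when the index set $K(\bar y)=\{k:\bar y\in O(\bar z_k)\}$ satisfies $\sum_{k\in K(\bar y)}\psi_{\bx}[\bar z_k;t]\geq\alpha$.

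First I would show the inclusion $S_\alpha(t;\phi^r_{\bx})\subseteq\bigcup_{D_j\in\mathscr{D}_z(\alpha,t)}\bigcap_{\bar z_k\in D_j}O(\bar z_k)$: given $\bar y\in S_\alpha$, the set of centers covering $\bar y$ is some collection $D=\{\bar z_k:\bar y\in O(\bar z_k)\}$ whose mass exceeds $\alpha$ and whose associated bodies all contain $\bar y$, so $\bigcap_{\bar z_k\in D}O(\bar z_k)\ni\bar y$ is nonempty; thus $D$ is one of the $D_j(\alpha,t)$ in the definition \eqref{eq:Dj}, and $\bar y$ lies in the corresponding intersection. Conversely, for the reverse inclusion I would take any $D_j\in\mathscr{D}_z(\alpha,t)$ and any $\bar y\in\bigcap_{\bar z_k\in D_j}O(\bar z_k)$; then every center in $D_j$ covers $\bar y$, so $K(\bar y)\supseteq D_j$, whence $\phi^r_{\bx}(\bar y;t)\geq\sum_{\bar z_k\in D_j}\psi_{\bx}[\bar z_k;t]\geq\alpha$, giving $\bar y\in S_\alpha$. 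These two inclusions establish the set equality $S_\alpha(t;\phi^r_{\bx})=\bigcup_{D_j\in\mathscr{D}_z(\alpha,t)}\bigcap_{\bar z_k\in D_j}O(\bar z_k)$.

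It then remains to observe that each term of the union is convex. Each $O(\bar z_k)$ is a translate of the rigid body $O(\cdot)$, which is convex by hypothesis; translates of convex sets are convex, and a finite (indeed arbitrary) intersection of convex sets is convex, so $\bigcap_{\bar z_k\in D_j}O(\bar z_k)$ is convex. Since $\mathscr{D}_z(\alpha,t)$ is finite (a subcollection of the $2^{|\FSRset(t,\mathcal{I})|}$ subsets, with $\FSRset(t,\mathcal{I})$ finite in the relevant support), $S_\alpha(t;\phi^r_{\bx})$ is a finite union of convex sets, as claimed.

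The main obstacle I anticipate is bookkeeping rather than conceptual: one must be careful that the index $j$ ranging over $\{1,\dots,2^{|\FSRset(t,\mathcal{I})|}\}$ in \eqref{eq:Dj} genuinely enumerates \emph{all} maximal cover-sets that can arise as $K(\bar y)$, so that no point of $S_\alpha$ is missed in the first inclusion; equivalently, one should note that if a subset $D$ has positive-enough mass and nonempty body-intersection then \emph{every} point of that intersection is automatically in $S_\alpha$, and enlarging $D$ to the full $K(\bar y)$ only shrinks the intersection while keeping it nonempty. A secondary point worth stating explicitly is that although $\FSRset(t,\mathcal{I})$ is only assumed countable, for the union to be genuinely finite one uses that only finitely many centers carry mass $\geq$ any fixed positive threshold, or more simply that the construction is applied on the finite effective support; this should be flagged so the ``union of convex sets'' is a finite union and thus usable in the integer-programming formulation.
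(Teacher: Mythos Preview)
Your proposal is correct and follows essentially the same approach as the paper: you establish the identity $S_\alpha(t;\phi^r_{\bx})=\bigcup_{D_j\in\mathscr{D}_z(\alpha,t)}\bigcap_{\bar z\in D_j}O(\bar z)$ and then invoke the fact that intersections of convex sets are convex. In fact, your argument is more complete than the paper's, which simply asserts the set identity \eqref{eq:sup_lvl_single} without verifying the two inclusions and then observes that intersection preserves convexity; your careful treatment of both directions, and your remark about the bookkeeping of $K(\bar y)$ versus the enumerated $D_j$, fills in exactly the details the paper leaves implicit.
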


\begin{proof}
    Define regions of overlap described by (\ref{eq:Dj}) for a given likelihood $\alpha$ and FSRPM $\psi_{\bx}[\bar{y};t]$. Then, 
    \begin{align}
        S_\alpha(t;\phi_{\bx}^r)&=\bigcup_{D_j(\alpha,t)\in
    \mathscr{D}_z(\alpha,t)}\bigcap_{\bar{z}\in
    D_j(\alpha,t)}O(\bar{z})\subseteq \mathcal{X} \label{eq:sup_lvl_single}
    \end{align}
    for $\bar{z}\in \mathcal{X}$.
    The proof is complete with the observation that intersection preserves
    convexity.
\end{proof}
Hence Proposition 1 solves Problem \ref{prob:prob}.
\begin{figure}[h]
 \includegraphics[width=0.7 \linewidth]{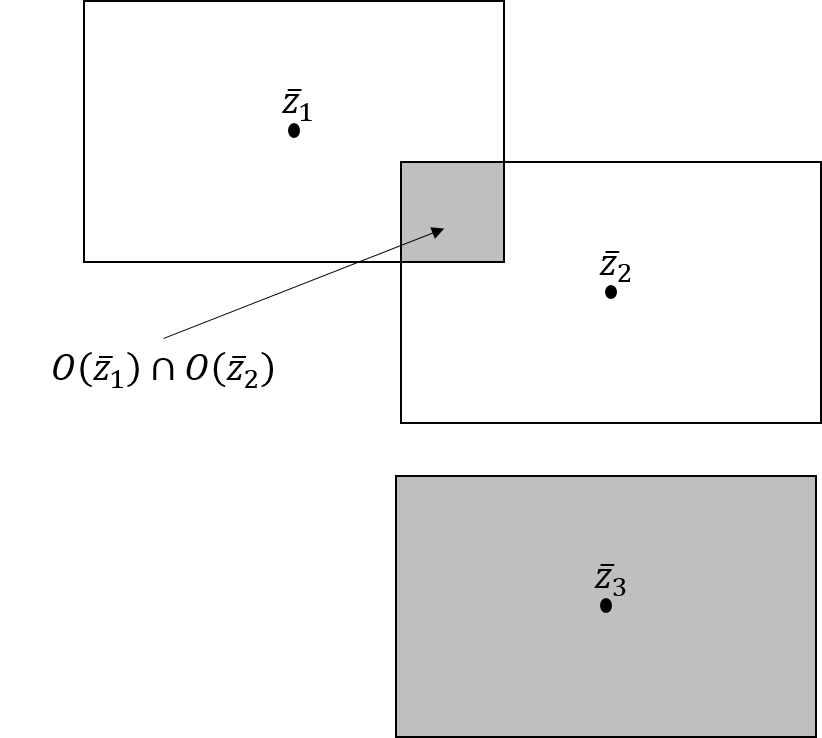}
 %For a rigid body obstacle with $\psi_{\bx}^{c}[\bar{z}_i;t]\geq\alpha$ for $i=1,2,3$, $\mathscr{D}_z(\alpha,t)=\{z_1,z_2,z_3,\{z_1,z_2\}\}$. 
\caption{The shaded region has a likelihood of collision greater than $\alpha$. Consider a rigid-body obstacle, with possible obstacle locations $\bar z_1, \bar z_2, \bar z_3$ at time $t$.  Presume that $\psi_{\bx}[\bar{z}_1;t] < \alpha$, $\psi_{\bx}[\bar{z}_2;t]< \alpha$, $\psi_{\bx}[\bar{z}_3;t] \geq \alpha$.  Note that overlap between obstacles with centers $\bar{z}_1$ and $\bar{z}_2$ creates a likelihood greater than $\alpha$ for $O(\bar{z}_1) \cap O(\bar{z}_2)$, so that $\mathscr{D}_z(\alpha,t) = \{ D_1(\alpha, t), D_2(\alpha, t)\}$ with $D_1(\alpha, t) = \{z_1, z_2\}$ and $D_2(\alpha, t) = \{z_3\}$. Thus, $S_\alpha(t;\phi_{\bx}^r)=(O(\bar{z}_1)\cap O(\bar{z}_2))\cup O(\bar{z}_3)$.}
 %the overlap has a probability greater than $\alpha$ and 
%\caption{The different $D_z(\alpha,t)$ sets for a rigid body obstacle with $3$ center points in its FSR set at $t$. Here $\psi_{\bx}^{c}[\bar{z}_i;t]\geq\alpha, \forall i$. If $\psi_{\bx}^{c}[\bar{z}_i;t]<\alpha, i=1,2$ and the overlap has a probability greater than $\alpha$, then the $D_z(\alpha,t)$ sets are: $D_z(\alpha,t) = \{\bar{z}_3\}$ and $D_z(\alpha,t) =\{\bar{z}_1,\bar{z}_2 \}$ }
\label{fig:dz}
\end{figure}
%Thus the (\ref{eq:sup_lvl_single}) provides the solution to Problem \ref{prob:prob}.

%Let an indicator function be $\textbf{1}_{h}(\bar{y} - \bar{z})$, so that it is $1$ if $h(\bar{y} - \bar{z})\leq 0$ and $0$ otherwise. 

%Using an indicator function to express the rigid body, 
Since the indicator function for the obstacle geometry in (\ref{eq:prob_rigid}) can be equivalently 
expressed as $\textbf{1}_{O(\bar{z})}(\bar{y}) =
\textbf{1}_{O(0)}(\bar{y} - \bar{z})$, %, so that it is $1$ if $h(\bar{y} - \bar{z})\leq 0$ and $0$ otherwise,
% [[why just for square and circular obstacles?  couldn't any set be described this way?  An indicator function is defined for some set $K \subseteq X$, no restrictions on what that set looks like.]]
the occupancy function \eqref{eq:prob_rigid} can be re-written as
\begin{align}
\phi_{\bx}^{r}(\bar{y};t) &= \sum_{\bar{z}\in \FSRset(t,\mathcal{I})}
    \psi_{\bx}[\bar{z};t] \textbf{1}_{O(0)}(\bar{y} - \bar{z}) \nonumber \\
&= \left(\psi_{\bx}[\cdot;t] \ast
\textbf{1}_{O(0)}(\cdot)\right)(\bar{y}) \label{eq:conv_ind}
\end{align}
Equation (\ref{eq:conv_ind}) is similar to the concept of blurring in
image processing, in which an image (in our case, $\psi_{\bx}[\cdot;t]$), is
convoluted with a shift-invariant point spread function, (in our case,
$\textbf{1}_{O(0)}(\cdot)$). 
Such a formulation enables potential use of tools from
image processing for the computation of $\phi_{\bx}^{r}[\cdot;t]$ and
its support for rigid body obstacles.

%\mytodo{The figure caption changed.} [[is this right?  or do you want the refer to the previous definition of the superlevel set?]] 
Now, we analyze the convexity property of (\ref{eq:sup_level_set}) for multiple moving obstacles. For $N_\mathrm{obs}$ homogeneous obstacles, we denote the concatenated random
vector of obstacle centers as $\bX=[\bx_1\ \bx_2\ \ldots\
\bx_{N_\mathrm{obs}}]\in \mathcal{X}^{N_\mathrm{obs}}$. 
We presume that the obstacles do not interact with each other, and hence are stochastically
independent. For a given obstacle characterization 
$Z=[\bar{z}_1\ \bar{z}_2\ \ldots\ \bar{z}_{N_\mathrm{obs}}]\in \mathcal{X}^{N_\mathrm{obs}}$, the forward
reach set and the probability measure of the obstacle configuration are described by
\begin{align}
    \FSRset_{\bX}(t,
    \mathcal{I})&=\bigtimes_{i=1}^{N_\mathrm{obs}}\FSRset_{\bx_i}(t,
    \mathcal{I})\label{eq:FSRset_config} \\
    \psi_{\bX}[Z;t]&=\prod_{i=1}^{N_\mathrm{obs}}\psi_{\bx_i}[\bar{z}_i;t].\label{eq:FSRPM_config}
\end{align}
Computation of \eqref{eq:FSRset_config}, \eqref{eq:FSRPM_config} relies on
%a configuration of obstacles is done via
Algorithm~\ref{algo:FSR_thm1} to compute  
\eqref{eq:FSRPM_recursive_sum}, \eqref{eq:FSRPM_psif}
for each obstacle individually. 
%and combined using

%Similar to \eqref{eq:prob_rigid}, w
We then define the joint occupancy function $\phi_{\bX}^r(\bar{y},t):  \mathcal{X}\times[0,T] \rightarrow [0,1]$ for a group of obstacles as the probability of any obstacle in the group occupying a state $\bar{y}$.  Because of the mutual exclusivity of the configurations, the joint occupancy function is described by
% we define the occupancy function for a configuration of multiple
% obstacle as the probability of a single obstacle in the configuration
% occupying a state $\bar{y}$,
\begin{align}
    \phi_{\bX}^{r}(\bar{y};t) &= \sum_{Z\in \FSRset_{\bX}(t, \mathcal{I})}
    \psi_{\bX}[Z;t]\textbf{1}_{\overline{O}(Z)}(\bar{y})
    \label{eq:prob_rigid_config_1}
\end{align}
with $\overline{O}(Z)=\cup_{i=1}^{N_\mathrm{obs}}O(\bar{z}_i)$.  
%Note that with abuse of notation, we define
%$\textbf{1}_{\bar{y}\in \overline{O}(Z)}$ to be $1$ if for some
%$i\in\{1,\ldots,N_\mathrm{obs}\}$, $\bar{y}\in O(\bar{z}_i)$, and $0$ otherwise. 
Similarly to (\ref{eq:Dj}), we define $\bar{D}_j(\alpha,t)$ as the sets of configurations, $j\in \{1,2,3,...,2^{|\FSRset_{\bX}(t,\mathcal{I})|}\}$, whose probability of occurrence is greater than $\alpha$ and the resulting overlap is non-empty, and define $\mathscr{D}_Z(\alpha,t) $ as the collection of such sets for a given time $t$ in the configuration space $Z$. 

Using an approach similar to that of
Proposition~\ref{prop:cvx_union_obs}, we can show that the $\alpha$ superlevel set of
$\phi_{\bX}^{r}(\bar{y};t)$ is
\begin{align}
    S_\alpha(t;\phi_{\bX}^{r})&=\bigcup_{\bar{D}_j(\alpha,t)\in
\mathscr{D}_Z(\alpha,t)}\bigcap_{Z\in \bar{D}_j(\alpha,t)} \bigcup_{i=1}^{N_\mathrm{obs}}O(\bar{z}_i). \label{eq:sup_config}
\end{align}
Note that for $N_\mathrm{obs}=1$, \eqref{eq:sup_config}
reduces to \eqref{eq:sup_lvl_single}, as expected. 
We see from (\ref{eq:sup_config}) that the avoid set for multiple moving obstacles is in general non-convex, and cannot be expressed as a union of convex avoid sets. Thus, to utilize integer programming based methods, the sets $S_\alpha(t;\phi_{\bX}^{r})$ at every $t$ must be overapproximated as a union of convex sets.  Although this is typically computationally expensive, we provide one such method in the next Section.  
%We propose In the next section we show how it is achieved.

An alternative interpretation of \eqref{eq:prob_rigid_config_1} %for $\phi_{\bX}^r$
can be given by using events $\mathcal{E}_i$, which occur when
$\mathbf{1}_{O(\bx^i)}(\bar y) = 1$.
%described as:  
%$\mathcal{E}_i=\{\bar{y}\in O(\bx^i)\}\subseteq
%\mathcal{X}^{N_\mathrm{obs}}$ corresponds to the 
Essentially, the event $\mathcal E_i$ corresponds to the $i^\mathrm{th}$
obstacle occupying the state $\bar{y}\in \mathcal{X}$.  Note that the event $
\mathcal{E}_i$ depends only on the state of $i^\mathrm{th}$ obstacle center, and
does not provide any restrictions on the centers of other obstacles in the
configuration. Equation \eqref{eq:prob_rigid_config_1} can be rewritten as 
\begin{align}
    \phi_{\bX}^{r}(\bar{y};t) &= \Exp_{\bX}^t\left[\textbf{1}_{\overline{O}(Z)}(\bar{y}) \right]= \Prob_{\bX}^t\left\{  \bigcup_{i=1}^{N_\mathrm{obs}}
    \mathcal{E}_i\right\}
    \label{eq:prob_rigid_config_2}
\end{align}
where $\Prob_{\bX}^t$ denotes the joint probability measure associated with the
configuration of the obstacles. 
Such a formulation is important for constructing 
an overapproximation of avoid set (and hence under-approximation of the collision-free set)
that can be represented as the union of convex sets.
%[[Summarize in one sentence why this is a beneficial representation, as compared to (19).]]
%We define
%\begin{align}
%    \overline{S}_\theta(t;\phi_{\bX}^r)
%    %&=\bigcap_{i=1}^{N_\mathrm{obs}} S_\frac{\theta}{N_\mathrm{obs}}[t;\phi_{\bx}^r,i] \nonumber \\
%    &=\bigcup_{i=1}^{N_\mathrm{obs}} \left\{\bar{y}\in
%    \mathcal{X} \middle\vert \Prob_{\bX}^{t}(\mathcal{E}_i)\geq
%\frac{\theta}{N_\mathrm{obs}} \right\}\label{eq:safety_P_approx}.%\\
%%    &=\left(\bigcup_{k=1}^{N_o}\left\{\bar{y}\in
%%    \mathcal{X} \middle\vert \Prob_{\bx_o}^{t,\bar{x}_i,k}(\mathcal{E}_i)>
%%    \phi \right\}\right)^C\label{eq:safety_P_approx_comp}%\\
%\end{align}
%
%
%Since $\Prob(\cup_i \mathcal{E}_i)\leq \sum_i\Prob( \mathcal{E}_i)$, we have
%\begin{align}
%    S[t;\phi_{\bX}^r]\subseteq \overline{S}_\theta(t;\phi_{\bX}^r).
%%    \underline{\mathrm{SafeSet}}[t;\theta]\subseteq
%%    \mathrm{SafeSet}[t;\theta].\label{eq:SafeSet_underapprox}
%\end{align}
%The formulation (\ref{eq:sup_config}) and (\ref{eq:prob_rigid_config_2}) is amenable to under-approximation as a union of convex sets.  

We define ``safety'' as ensuring that the probability of collision
of the robot with any of the obstacles at any given time $t\in[0,T]$ is less
than a specified threshold, $\alpha\in[0,1]$. We define the safe set $\mathrm{SafeSet}[t;\alpha]$ as the complement of the set $ S_\alpha(t;\phi_{\bX}^{r})$, such that
\begin{align}
%\bigcap_{t=0}^{T}
    \mathrm{SafeSet}[t;\alpha]&=\left\{\bar{y}\in \mathcal{X}\middle\vert
\phi^r_{\bX}(\bar{y})< \alpha \right\}= \mathcal{X}\setminus S_\alpha(t;\phi_{\bX}^r).\label{eq:safety}
\end{align}

%We wish to solve Problem \ref{prob:prob3}, however 
Note that (\ref{eq:safety}) is not
guaranteed to be convex even if the obstacles are convex.
Convexification methods \cite{accikmecse2013lossless} for evaluation of \eqref{eq:sup_config} and \eqref{eq:safety} would need to be implemented online, and are computationally expensive.  
Hence, for computational tractability, at each time step $t$, 
we underapproximate %the $\mathrm{SafeSet}[t;\alpha]$ 
\eqref{eq:safety} using the definition of $\phi_{\bX}^r$ in
\eqref{eq:prob_rigid_config_2},
\begin{align}
    \underline{\mathrm{SafeSet}}[t;\alpha]&=\bigcap_{i=1}^{N_\mathrm{obs}}\left\{\bar{y}\in
    \mathcal{X} \middle\vert \Prob_{\bX}^t(\mathcal{E}_i)<
\frac{\alpha}{N_\mathrm{obs}} \right\}.
\label{eq:safety_P_approx}
\end{align}
%where $\mathcal{E}_i=\{\bar{y}\in O(\bx_o^i)\}$.  
Since $\Prob(\cup_i
\mathcal{E}_i)\leq \sum_i\Prob( \mathcal{E}_i)$, 
\begin{align}
    \underline{\mathrm{SafeSet}}[t;\alpha]\subseteq
    \mathrm{SafeSet}[t;\alpha].\label{eq:SafeSet_underapprox}
\end{align}
By construction, $\mathcal{E}_i$ restricts the state of the $i^\mathrm{th}$
obstacle alone. Therefore, \eqref{eq:safety_P_approx} can be computed using
\eqref{eq:sup_lvl_single} as
\begin{align}
    \underline{\mathrm{SafeSet}}[t;\alpha]&=\bigcap_{i=1}^{N_\mathrm{obs}}\left( \mathcal{X}\setminus
S^i_\frac{\alpha}{N_\mathrm{obs}}(t;\phi_{\bx}^r)\right)
.\label{eq:safety_P_approx_2} \\
 &=\mathcal{X}\setminus \bigcup_{i=1}^{N_\mathrm{obs}}\left( S^i_\frac{\alpha}{N_\mathrm{obs}}(t;\phi_{\bx}^r)\right)\label{eq:safety_P_approx_3}
\end{align}
Hence Problem \ref{prob:prob3} is solved, since 
%For the obstacle avoidance problem considered in this paper, the description in 
 $\cup_{i=1}^{N_{obs}}S^i_\frac{\alpha}{N_\mathrm{obs}}(t;\phi_{\bx}^r)$ in (\ref{eq:safety_P_approx_3}) is an overapproximation that can be written as a union of convex sets.
 %, and hence solves Problem \ref{prob:prob3}.
Here, $S^i_\frac{\alpha}{N_\mathrm{obs}}$ is the $\left(\frac{\alpha}{N_\mathrm{obs}}\right)$-superlevel set of the occupancy function of the $i^{th}$ obstacle at time $t$. Note that 
$S^i_\frac{\alpha}{N_\mathrm{obs}}(\cdot,t)$ for every $t=[0,1, \cdots ,T]$ and for every obstacle $i$ can be computed offline, and hence any form of online convexification is avoided with this overapproximation.

The particular under-approximation will be specific to the problem at hand.  In general, integer variables must be introduced to accommodate each obstacle, and the sets in (\ref{eq:sup_config}) approximated via a set of linear constraints.  
%Such an approximation solves Problem 3.  
We demonstrate this approach in the next Section.

%%%%%%%%%%%%%%%%%%%%%%%%%%%%%%%%%%%%%%%%%%%%%%%%%%%%%%%%%%%%%%%%
%%%%%%%%%%%%%%%%%%%%%%%%%%%%%%%%%%%%%%%%%%%%%%%%%%%%%%%%%%%%%%%%

\section{Application to obstacle avoidance}
\label{sec:motionPlanning}

%\subsection{Obstacle avoidance example}
%\label{sub:obstacle_avoidance_example}

We now consider the specific problem of 
%In this section, we show how the forward stochastic reach sets can be used in
%robotics for avoiding stochastically moving obstacles. We consider the problem
%of a robot 
robot navigation in an environment with ${N_\mathrm{obs}}$ rigid body obstacles
moving in straight lines with stochastic velocities. 
We use integer programming \cite{schouwenaars2002safe,mellinger2012mixed} in a receding horizon control framework to drive
the robot to the desired goal
$\bar{x}_G\in \mathcal{X}$ in finite time,
%obstacle avoidance problem is formulated as a receding horizon control problem as shown in \probref{Prob P1}. 
while ensuring a probabilistic guarantee of safety.  
We presume that robot and obstacle positions are known at each instant.  
%Robot dynamics are presumed to be deterministic.
%At each instant, the information of the positions of the robot as well as the obstacle
%is assumed to be known. Further, we assume the deterministic dynamics of the
%robot and the stochastic dynamics of the obstacles are given.

We model the robot as a point mass under state-feedback control
\begin{align}
    \bar{x}_R[t+1]&=\bar{x}_R[t]+B_R u[t] \label{eq:robot_dyn_SIM1}
\end{align} 
with state $\bar{x}_R[t]\in \mathcal{X}=
\mathbb{R}^{2\times 1}$ that represents robot position and input $u[t]\in \mathcal{U}\subseteq \mathbb{R}^2$.  
The input matrix is $B_R=T_sI_2$, with sampling time $T_s$.  

The obstacles have identical dynamics and do not interact with each other, and have rigid bodies 
that are unit boxes with fixed heading. 
%TODO
In the absence of any rotation, the obstacle position is completely
characterized by the dynamics of the center. The dynamics of the center of the
$k^\mathrm{th}$ obstacle is described
\begin{align}
    \bx^k_o[t+1] &= \bx^k_o[t] + B_o \bw^k[t]
\end{align} 
with state $\bx^k_o[t]\in \mathcal{X}$,
stochastic velocity $\bw^k[t]\in \mathcal{W}_{o,d}^2$ described by an i.i.d. process, 
and disturbance matrix $B_o = B_R$.  The disturbance set $ \mathcal{W}_{o,d}\subseteq \mathbb{R}$ describes possible obstacle velocities. 
%We analyze the problem for a time horizon $T$, and 
%assume the stochastic velocity vector $\bw^k[t]$ to be drawn from an i.i.d process. 
We define the probability mass function of the velocity vector
$\bw^k$ to be $\psi_{\bw^k}[z]$, hence
%As seen in Section~\ref{sub:problem_formulation}, 
the state $\bx_o^k[t]$ is a random
vector in the probability space $( \mathcal{X}, \sigma( \mathcal{X}),
\Prob_{\bx_o}^{t,\bar{x}_o,k})$ for a given initial position $\bar{x}_o\in
\mathcal{X}$. The probability measure associated with $k^\mathrm{th}$ obstacle
$\Prob_{\bx_o}^{t,\bar{x}_o,k}$ is induced from the product measure associated
with $\psi_{\bw^k}$ and depends on the initial position $\bar{x}_0$ and time
$t$. 

% The forward stochastic reachability analysis
% (Theorem~\ref{thm:FSR_nonlinear}) provide the probability mass function
% $\psi_{\bx_o}[\cdot;t,\bar{x}_k,k]$ and the support
% $\FSRset(t,\bar{x}^k_o)$ for each of the ${N_\mathrm{obs}}$ obstacles. 

% Prob1 figure 
% \begin{figure*}[h]
%     \centering 
%     \begin{align}
%        \mbox{Prob A:} \begin{array}{rl}
%              &\begin{array}{rcl}
%             \underset{\pi}{\operatorname{minimize}} & & J(\pi;\bar{x}_R[0],\bX[\cdot])=\sum_{t=0}^T \left\{(\bar{x}_R[t]-\bar{x}_G)^\top
%         Q(\bar{x}_R[t]-\bar{x}_G)+\pi(t,\bar{x}_R[t],\bX[t])^TR^u(t,\bar{x}_R[t],\bX[t])\right\}\\
%           \mbox{subject to}& & \left\{\begin{array}{rll}
%             \bar{x}_R[t] &=\bar{x}_R[t-1]+B_R\pi(t,\bar{x}_R[t-1],\bX[t])
%         &t=1,\ldots,T, \\
%             \bar{x}_R[t] &\in \mathrm{SafeSet}[t;\alpha] &t=1,\ldots,T  \\
%             \pi(\cdot) &\in \mathcal{M}                  \\
%         \end{array}\right.
%       \end{array}\nonumber 
%         \end{array}
%     \end{align}
%     %where the decision variable is the policy $\pi(t,\bar{x}_R[t],\bX[t])$. %:
%     %[0,T-1]\times\mathcal{X} \rightarrow \mathcal{U}$ and $Q,R^{\pi}$ are
%     %positive definite matrices of appropriate dimensions.\\ 
%     \rule{\textwidth}{0.5pt}
% \end{figure*}
\begin{figure*}[h]
    \centering 
    \begin{align}
       \mbox{Prob A:} %\begin{array}{rl}
             &\begin{array}{rcl}
            \underset{\pi}{\operatorname{minimize}} & & J(\pi;\bar{x}_R[0],\bX[\cdot])=\sum_{t=0}^T \left\{(\bar{x}_R[t]-\bar{x}_G)^\top
        Q(\bar{x}_R[t]-\bar{x}_G)+\pi(t,\bar{x}_R[t],\bX[t])^TR^u(t,\bar{x}_R[t],\bX[t])\right\}\\
          \mbox{subject to}& & \left\{\begin{array}{rll}
            \bar{x}_R[t] &=\bar{x}_R[t-1]+B_R\pi(t,\bar{x}_R[t-1],\bX[t])
        &t=1,\ldots,T, \\
            \bar{x}_R[t] &\in \mathrm{SafeSet}[t;\alpha] &t=1,\ldots,T  \\
            \pi(\cdot) &\in \mathcal{M}                  \\
        \end{array}\right.
      \end{array}\nonumber 
       % \end{array}
    \end{align}
    %where the decision variable is the policy $\pi(t,\bar{x}_R[t],\bX[t])$. %:
    %[0,T-1]\times\mathcal{X} \rightarrow \mathcal{U}$ and $Q,R^{\pi}$ are
    %positive definite matrices of appropriate dimensions.\\ 
    \rule{\textwidth}{0.5pt}
\end{figure*}

%Thus the solution to Problem \ref{prob:prob3} [[for this scenario]] is the avoid set $\cup_{i=1}^{N_{obs}}S^i_\frac{\alpha}{N_\mathrm{obs}}(t;\phi_{\bx}^r)$.
%, and therefore,
%$\Prob_{\bX}^t(\mathcal{E}_i)$ is the probability of the $i^\mathrm{th}$
%obstacle occupying the state $\bar{y}$ at time $t$. The random vector $\bx_o^i$
%takes the centers in $D_z(\alpha,t)$ with probability
%greater than $\alpha$ at time $t$ implying
%\begin{align}
%    \underline{\mathrm{SafeSet}}[t;\theta]&=\bigcap_{i=1}^{N_\mathrm{obs}}\left\{\bar{y}\in
%    \mathcal{X} \middle\vert \Prob_{\bX}^t(\mathcal{E}_i)\leq
%\frac{\theta}{N_\mathrm{obs}} \right\}\label{eq:safety_P_approx}.%\\
%%    &=\left(\bigcup_{k=1}^{N_o}\left\{\bar{y}\in
%%    \mathcal{X} \middle\vert \Prob_{\bx_o}^{t,\bar{x}_i,k}(\mathcal{E}_i)>
%%    \phi \right\}\right)^C\label{eq:safety_P_approx_comp}%\\
%\end{align}
%Equation \eqref{eq:safety_psi_approx} follows from \eqref{eq:safety_P_approx} by
%definition of $\psi_{\bx_o^k}[\cdot;t,\bar{x}_k]$.

We wish to solve Problem \probref{Prob A}.
The control policy $\pi(t,\bar{x}_R[t],\bX[t]):[0,T-1]\times \mathcal{X}^{1+N_\mathrm{obs}} \rightarrow
\mathcal{U}$ is a state-feedback control with the set of feasible policies
$\pi(\cdot)$ denoted by $ \mathcal{M}$. Here, $Q$ and $R^{u}$ are symmetric positive definite
matrices of appropriate dimensions. 

A conservative solution to Problem \probref{Prob A} can be found by solving the
following optimization problem:% for $\pi$:
\begin{align}
    \begin{array}{rl}
        \mbox{Prob B:} &\begin{array}{rl}
      \mbox{minimize} & J(\pi;\bar{x}_R[0],\bX[\cdot]) \\
          \mbox{subject to} & \left\{\begin{array}{rll}
         \bar{x}_R[t]&\mbox{ by \eqref{eq:robot_dyn_SIM1} with }\pi  &\forall t\\
%      \phi&\geq \Prob_{\bx_o}^{t,\bar{x}_k,k}(\bar{x}_R[t]\in O_k(\bx_o^k[t])) &\forall t,k  \\
        \bar{x}_R[t]&\in \underline{\mathrm{SafeSet}}[t;\alpha] &\forall t  \\
      \pi&\in \mathcal{M} & \\
        \end{array}\right.
  \end{array}\nonumber 
  \end{array}
\end{align}

We replace the constraint
$x_R[t]\in \underline{\mathrm{SafeSet}}[t;\alpha]$ in
Problem~\probref{Prob B} by defining 
\begin{equation}
K_i=\{\bar{y}\in \mathcal{X}\vert P_{i}\bar{y}\leq \bar{q}_i, P_i \in
\mathbb{R}^{n_i\times 2},\bar{q}_i \in \mathbb{R}^{n_i}\}
\end{equation}
such that 
$\mathcal{X}\setminus(\cup_{i=1}^{N_s}
K_i)\subseteq\underline{\mathrm{SafeSet}}[t;\alpha]$, resulting in 
%
%where $K_i$ are convex sets ($N_s$ in number) specified by $n_i$ linear inequality constraints,
%
%The constraint
%$x_R[t]\in \underline{\mathrm{SafeSet}}[t;\alpha]\ \forall t$ in
%Problem~\probref{Prob2} is replaced by 
the following constraint set for
$i=1,\ldots,N_s$:
\begin{subequations}
\label{eq:const}
\begin{align}
    \delta_{i,l} &\in\{0,1\}, & l=1,\ldots,n_i \label{eq:cons1a}\\
    -\bar{p}_{i,l}[t]^\top\bar{x}_R[t] &< -q_{i,l}[t] + M_\mathrm{big}\delta_{i,l}\label{eq:cons1b}\\
    \sum_{l=1}^{n_i}\delta_{i,l} &\leq (n_i-1) \label{eq:cons1c}
\end{align}
\end{subequations}
Here, $\bar{p}_{i,l}[t]$ and $q_{i,l}[t]$ are the $l^{th}$ row of matrix
$P_i[t]$ and $l^{th}$ element of vector $\bar{q}_i[t]$ respectively. The term
$M_\mathrm{big}$ is a large number that facilitates the constraint satisfaction.
The constraint \eqref{eq:cons1c} ensures that at least one of the binary
variables $\delta_{i,l}=0$ for every $i$. This formulation ensures the robot avoids every avoid set $i=1,2, \cdots, N_s$.

We implement the problem with the following parameters: $T_s=0.2$,
$T=50$, the stochastic speed set $\mathcal{W}_{o,d}=\{3,2.5,1.5,2,1,0.8,0.5,0.1\}$ m/s with probabilities $\psi_{\bw^k}[z]\in\{0.05,0.05,0.30,0.20,0.25,0.10,0.04,0.01\}$. 
%for the respective $z\in \mathcal{W}_{o,d}$, the probability mass function of the
%stochastic velocity vector

The input space for the robot is $
\mathcal{U}=[-0.2,1]\times[0.1,1]$, so that it cannot stop in the $y$-direction.  
Note that the average velocity of each obstacle is
$1.476$ m/s while the maximum robot velocity in both directions is $1$ m/s,
which is about two-thirds %$68\%$ of the obstacle average velocity and one third
the obstacle's maximum velocity. The robot is disadvantaged because it is slower than the obstacles.

To compute the forward stochastic reach sets and occupancy function, we discretize the state space with a resolution of $0.05$ and follow Algorithm \ref{algo:FSR_thm1}. 
We use YALMIP \cite{lofberg2004yalmip} with the Gurobi \cite{gurobi} solver to solve Problem \probref{Prob B} with the constraint in (\ref{eq:const}).  The computation took approximately $0.04$ seconds to complete.

Results are shown in Figure \ref{fig:sim1} from a single initial obstacle-robot configuration.  We compare our probabilistic approach with the case in which $\alpha = 0$, which is equivalent to the result from the conservative min-max
solution 
%(Note that setting $\alpha = 0$ results in the robust solution 
in~\cite{mitchell2005time,wu2012guaranteed, chung2006coordinated, althoff2014online, lin2015collision}.  
Note that the min-max solution becomes infeasible at approximately $1.8$ seconds (10 time steps).  With $\alpha = 0.045$, meaning that obstacles should be avoided with likelihood of 0.95, feasible solutions are found for the entire time horizon.

\begin{figure}[h]
 \includegraphics[width=0.65\linewidth]{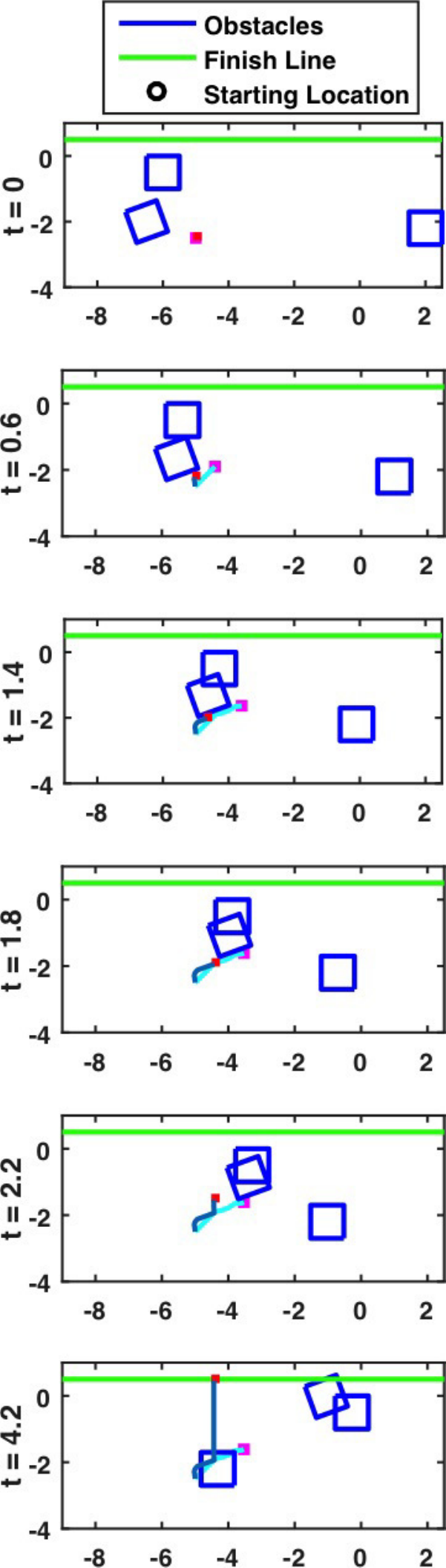}
 \caption{Snapshots of stochastically moving obstacles and robots and their trajectories. The red robot's trajectory is indicated by the blue line, and the pink robot's trajectory is indicated by the cyan line. The red robot uses $\alpha = 0.045$ while the pink robot uses $\alpha =0$ (the min-max problem).}
\label{fig:sim1}
\end{figure}

%%%%%%%%%%%%%%%%%%%%%%%%%%%%%%%%%%%%%%%%%%%%%%%%%%%%%%%%%%%%%%%%%%%%%%%%%%%%%%%%%%%%%%%%%%%%

\section{Conclusions and Future work}
\label{sec:conc}

This paper provides a method for computing the forward stochastic reachable set and probability measure, with application to obstacle avoidance. The
method handles uncontrolled nonlinear systems, or systems with a known controller,
as well as an affine disturbance that captures the stochastic element. 
We have described how the forward stochastic reachable set and probability measure can be used to generate 
an occupancy constraint that can be written as a union as convex sets, and hence is amenable to use in 
existing integer programming based methods for collision avoidance over a finite horizon.

Future work includes the extension to problems with an uncountable sample space, and development of computationally efficient online control methods. We also anticipate application of these techniques to (dynamic) target reaching problems.

%\section*{Acknowledgment}

%\addtolength{\textheight}{-12cm}  

%\clearpage
%\bibliographystyle{IEEEtran}
\bibliographystyle{ieeetr}
\bibliography{obstacleavoid_acc2017}

\end{document}